\documentclass{IEEEtran}
\IEEEoverridecommandlockouts

\makeatletter
\def\endthebibliography{%
  \def\@noitemerr{\@latex@warning{Empty `thebibliography' environment}}%
  \endlist
}
\makeatother

%
\usepackage[left=1.2cm, right=1.2cm, top=1.8cm, bottom=3.7cm]{geometry}
\usepackage[cmex10]{amsmath}
\usepackage[utf8]{inputenc} 
\usepackage[T1]{fontenc}    
\usepackage{hyperref}       
\usepackage{url}            
\usepackage{booktabs}       
\usepackage{amsfonts}       
\usepackage{nicefrac}       
\usepackage{microtype}      
\usepackage{xcolor}         


\usepackage{amsmath,amssymb,amsthm}
\usepackage{graphicx}
\usepackage{comment}
\usepackage{flushend}

\newtheorem{lemma}{Lemma}
\newtheorem{theorem}{Theorem}

\newcommand*\diff{\mathop{}\!\mathrm{d}}


%
\title{Discriminative Mutual Information Estimators for Channel Capacity Learning}

\author{\IEEEauthorblockN{Nunzio A. Letizia \thanks{The authors are with the University of Klagenfurt - Chair of Embedded Communication Systems, 9020 Klagenfurt, Austria. (e-mail: \{nunzio.letizia, andrea.tonello\}@aau.at)} and Andrea M. Tonello} 
}

\begin{document}

\maketitle

\begin{abstract}
Channel capacity plays a crucial role in the development of modern communication systems as it represents the maximum rate at which information can be reliably transmitted over a communication channel. Nevertheless, for the majority of channels, finding a closed-form capacity expression remains an open challenge. This is because it requires to carry out two formidable tasks \textit{a)} the  computation of the mutual information between the channel input and output, and \textit{b)} its maximization with respect to the signal distribution at the channel input. 

In this paper, we address both tasks. Inspired by implicit generative models, we propose a novel cooperative framework to automatically learn the channel capacity, for any type of memory-less channel. In particular, we firstly develop a new methodology to estimate the mutual information directly from a discriminator typically deployed to train adversarial networks, referred to as \underline{di}scriminative \underline{m}utual information \underline{e}stimator (DIME). Secondly, we include the discriminator in a \underline{co}ope\underline{r}a\underline{ti}ve channel \underline{ca}pacity \underline{l}earning framework, referred to as CORTICAL, where a discriminator learns to distinguish between dependent and independent channel input-output samples while a generator learns to produce the optimal channel input distribution for which the discriminator exhibits the best performance. Lastly, we prove that a particular choice of the cooperative value function solves the channel capacity estimation problem. Simulation results demonstrate that the proposed method offers high accuracy.   
\end{abstract}

\section{Introduction}
\label{sec:introduction}
Reliable transmission in a communication medium was firstly investigated in the milestone work of C. Shannon \cite{Shannon1948}. He suggested to partition the full communication system into a chain of three fundamental blocks, namely, the transmitter, the channel and the receiver. When possible, mathematics helps to model each block resulting in a full tractable system. Recently, machine learning techniques have subverted such bottom-up approach and also in communications, several data-driven models for physical layer \cite{Oshea2017,Karanov2018,Kim2018,Nachmani2018,Dorner2018,Raj2018} have been developed to overcome the lack of a channel model and optimal coding-decoding schemes.

In quantitative terms, the trade-off between the rate of transmission and reliability is expressed by the channel capacity. For a memory-less vector channel, the capacity is defined as
\begin{equation}
\label{eq:capacity}
C = \max_{p_X(\mathbf{x})} I(X;Y),
\end{equation}
where $p_X(\mathbf{x})$ is the input signal probability density function (PDF), $X$ is the channel input signal and $Y$ is the channel output signal and $I(X;Y)$ is the mutual information between $X$ and $Y$. 
However, estimating the channel capacity is a challenging task and few special cases, such as the additive white Gaussian noise (AWGN) channel case, have been solved so far \cite{Forney1999}. 
When the channel is not AWGN, the analytic study becomes mostly intractable leading to numerical solutions, relaxations or lower and upper bounds \cite{Arnold2006}. 
We argue that a data-driven approach can be pursued to learn the channel capacity. In the following we develop a cooperative framework aiming at estimating the capacity of a generic memory-less vector channel. In particular, we firstly propose two discriminative mutual information estimators, indirect and direct, based on adversarial training objectives. Secondly, we exploit the direct estimator to solve the channel capacity estimation problem, seen as a cooperative game between a generator and a discriminator. Lastly, we establish a connection to other variational lower bounds.

\section{Related Work}
\label{sec:related}
The reinterpretation of the whole communication chain as an autoencoder-based system \cite{Oshea2017} pioneered a number of related works \cite{Dorner2018,TurboAE,Alberge2019} aimed at showing the potentiality of deep learning techniques applied to wireless communications. In this context, the autoencoder is represented by a deep neural network which takes as input a sequence of bits $\mathbf{s}$, produces a coded signal $\mathbf{x}$ fed into a channel layer with conditional probability density function $p_{Y}(\mathbf{y}|\mathbf{x})$, and tries to reconstruct the input sequence of bits from the channel output samples $\mathbf{y}$. Autoencoders are usually trained via cross-entropy minimization in order to reduce the block-error rate (BLER) of the full system. Only recently however, different metrics, such as bit-error rate (BER) metrics \cite{Cammerer2020} or the information rate \cite{Hoydis2019}, have been considered in the problem formulation. In particular, since the channel capacity constitutes the upper bound on the information rate, the maximization of the mutual information between the channel input and output has been discussed in \cite{Wunder2019,Letizia2021}. 

Traditional approaches for the mutual information estimation rely on binning, density and kernel estimation \cite{Moon1995} and k-nearest neighbours \cite{Kraskov2004}. Nevertheless, they do not scale to problems where high-dimensional data is present as it is the case in modern machine learning applications. Hence, deep neural networks have been recently leveraged to maximize variational lower bounds on the mutual information \cite{Nguyen2010, Mine2018, Poole2019a}. The expressive power of neural networks has shown promising results in this direction although less is known about the effectiveness of such estimators \cite{Song2020}, especially since they suffer from either high bias or high variance. 

Variational lower bounds estimators can be categorized in generative and discriminative approaches. The former attempts to separately estimate the joint and marginal distributions, $p_{XY}(\mathbf{x},\mathbf{y})$ and $p_{X}(\mathbf{x})\cdot p_{Y}(\mathbf{y})$, respectively. In \cite{Barber2003}, a variational distribution $q_{X,\phi}(\mathbf{x}|\mathbf{y})$ is introduced to replace the intractable conditional one $p_{X}(\mathbf{x}|\mathbf{y})$, leading to the bound
\begin{align}
\label{eq:Iba}
I(X;Y) \geq & I_{BA}(X;Y) \nonumber \\
= & \mathbb{E}_{(\mathbf{x},\mathbf{y})\sim p_{XY}(\mathbf{x},\mathbf{y})}\bigl[\log( q_{X,\phi}(\mathbf{x}|\mathbf{y}))
- \log( p_{X}(\mathbf{x})) \bigr] ,
\end{align}
where $q_{X,\phi}(\mathbf{x}|\mathbf{y})$ is a valid conditional distribution, parameterized by $\phi$. Instead, the discriminative approach attempts to directly estimate the density ratio (see \eqref{eq:density_ratio_1} in the following). It usually exploits an energy-based variational family of functions to provide a lower bound on the Kullback-Leibler (KL) divergence. As an example, the Donsker-Varadhan dual representation of the KL divergence \cite{Donsker1983} can be derived from \eqref{eq:Iba} \cite{Poole2019a} to produce the bound optimized in MINE \cite{Mine2018}
\begin{align}
\label{eq:MINE}
I(X;Y) \geq & I_{MINE}(X;Y) \nonumber \\
= & \sup_{\theta \in \Theta} \mathbb{E}_{(\mathbf{x},\mathbf{y})\sim p_{XY}(\mathbf{x},\mathbf{y})}[T_{\theta}(\mathbf{x},\mathbf{y})]  \nonumber \\
& - \log(\mathbb{E}_{(\mathbf{x},\mathbf{y})\sim p_X(\mathbf{x}) p_Y(\mathbf{y})}[e^{T_{\theta}(\mathbf{x},\mathbf{y})}]),
\end{align}
where $\theta \in \Theta$ parameterizes a family of functions $T_{\theta} : \mathcal{X}\times \mathcal{Y} \to \mathbb{R}$ through the use of a deep neural network. However, Monte Carlo sampling renders MINE a biased estimator. To avoid biased gradients, the authors in \cite{Mine2018} suggested to replace the partition function $\mathbb{E}_{p_X p_Y}[e^{T_{\theta}}]$ with an exponential moving average over mini-data-batches.

Another variational lower bound is based on the $f$-divergence representation introduced in \cite{Nguyen2010} (also referred to as $f$-MINE in \cite{Mine2018})
\begin{align}
\label{eq:NWJ}
I(X;Y) \geq & I_{NWJ}(X;Y) \nonumber \\
= & \sup_{\theta \in \Theta} \mathbb{E}_{(\mathbf{x},\mathbf{y})\sim p_{XY}(\mathbf{x},\mathbf{y})}[T_{\theta}(\mathbf{x},\mathbf{y})]  \nonumber \\
& -\mathbb{E}_{(\mathbf{x},\mathbf{y})\sim p_X(\mathbf{x}) p_Y(\mathbf{y})}[e^{T_{\theta}(\mathbf{x},\mathbf{y})-1}].
\end{align}
Although MINE provides a tighter bound $I_{MINE}\geq I_{NWJ}$, the NWJ estimator is unbiased. 

Both MINE and NWJ suffer from high-variance estimations and to combat such issue, the SMILE estimator was introduced in \cite{Song2020}. It is defined as
\begin{align}
\label{eq:SMILE}
I(X;Y)  \geq & I_{SMILE}(X;Y) \nonumber \\
= & \sup_{\theta \in \Theta} \mathbb{E}_{(\mathbf{x},\mathbf{y})\sim p_{XY}(\mathbf{x},\mathbf{y})}[T_{\theta}(\mathbf{x},\mathbf{y})] \nonumber \\
& -\log(\mathbb{E}_{(\mathbf{x},\mathbf{y})\sim p_X(\mathbf{x}) p_Y(\mathbf{y})}[\text{clip}(e^{T_{\theta}(\mathbf{x},\mathbf{y})},e^{-\tau},e^{\tau})]),
\end{align}
where $\text{clip}(v,l,u) = \max(\min(v,u),l)$ and it helps to obtain smoother partition functions estimates. SMILE is equivalent to MINE in the limit $\tau \to +\infty$. 

Another estimator referred to as noise-contrastive estimator (InfoNCE) \cite{NCE2018} is defined as 
\begin{align}
\label{eq:NCE}
I(X;Y) & \geq I_{NCE}(X;Y) \nonumber \\
& = \mathbb{E}_{(\mathbf{x},\mathbf{y})\sim p_{XY,N}(\mathbf{x},\mathbf{y})}\biggl[ \frac{1}{N} \sum_{i=1}^{N}{ \log\biggl( \frac{e^{T_{\theta}(\mathbf{x_i},\mathbf{y_i})}}{\frac{1}{N} \sum_{j=1}^{N}{e^{T_{\theta}(\mathbf{x_i},\mathbf{y_j})}}}\biggr)}  \biggr],
\end{align}
where $N$ is the batch size and $p_{XY,N}$ denotes the joint distribution of $N$ i.i.d. random variables sampled from $p_{XY}$. InfoNCE provides low variance estimates but it is upper bounded by $\log N$, resulting in a biased estimator. 

Inspired by the $f$-GAN training objectives \cite{Nowozin2016}, in the following we present two discriminative mutual information estimators that are based on the $f$-divergence measure but ultimately target the KL divergence.

\subsection*{\textbf{Notation and Remarks}}
\label{subsec:notation}
$X$ denotes a multivariate random variable of dimension $d$, while $\mathbf{x}\in \mathcal{X}$ denotes its realization. $p_Y(\mathbf{y}|\mathbf{x})$ and $p_{XY}(\mathbf{x},\mathbf{y})$ represent the conditional and joint probability density or mass functions, while $p_X(\mathbf{x})p_Y(\mathbf{y})$ is the product of the two marginals probability functions. $I(X;Y)$ denotes the mutual information between the random variables $X$ and $Y$. Lastly, $D_{\text{KL}}(p||q)$ is the Kullback-Leibler divergence of the distribution $p$ from $q$.
All lemmas and theorems are proved in the Appendix.

\section{Discriminative Mutual Information Estimation}
\label{sec:MI}
The mutual information between two random variables, $X$ and $Y$, is a fundamental quantity in statistics, information theory and communication engineering as it plays a crucial role in the development of optimal communication schemes. It quantifies the statistical dependence (linear and non-linear \cite{Letizia2020}) between $X$ and $Y$ by measuring the amount of information obtained about $X$ via the observation of $Y$ at the channel output. It is defined as
\begin{equation}
\label{eq:mutual_information}
{I}(X;Y) = \mathbb{E}_{(\mathbf{x},\mathbf{y})\sim p_{XY}(\mathbf{x},\mathbf{y})}\biggl[\log\frac{p_{XY}(\mathbf{x},\mathbf{y})}{p_X(\mathbf{x})\cdot p_Y(\mathbf{y})}\biggr].
\end{equation}

Unfortunately, computing $I(X;Y)$ is challenging since the joint probability density function $p_{XY}(\mathbf{x},\mathbf{y})$ and the marginals $p_X(\mathbf{x}),p_Y(\mathbf{y})$ are usually unknown, especially with high-dimensional data. Some recent techniques \cite{Papamakarios2017} have shown the ability to estimate probability density functions, exploiting neural networks that model the data dependence. However, to evaluate the mutual information, it is sufficient to compute the density ratio 
\begin{equation}
R(\mathbf{x},\mathbf{y}) = \frac{p_{XY}(\mathbf{x},\mathbf{y})}{p_X(\mathbf{x})\cdot p_Y(\mathbf{y})}
\label{eq:density_ratio_1}
\end{equation}
rather than the individual densities. In other words, the mutual information estimation can in principle enjoy the properties of implicit generative models, which are able to directly generate data that exhibits the same distribution of input data without any explicit density estimate. 
In this direction, the most successful technique is represented by generative adversarial networks (GANs), proposed in \cite{Goodfellow2014}. The main idea is to train a pair of networks in competition with each other: a generator network $G$ that captures the data distribution and a discriminator network $D$ that distinguishes if a sample is an original coming from real data rather than a fake coming from data generated by $G$. The training procedure for $G$ is to maximize the probability of $D$ making a mistake. GANs can be thought as a minimax two-player game which will end when a Nash equilibrium point is reached. Given an input noise vector $\mathbf{z}$ with PDF $p_{noise}(\mathbf{z})$, the mapping to the data space is achieved through $G(\mathbf{z};\theta_{gen})$. Defining the value function $V(G,D)$ as
\begin{align}
V(G,D) = & \; \mathbb{E}_{\mathbf{x} \sim p_{data}(\mathbf{x})}[\log D(\mathbf{x})] \nonumber \\ 
& + \mathbb{E}_{\mathbf{z} \sim p_{noise}(\mathbf{z})}[\log(1-D(G(\mathbf{z})))],
\label{eq:GAN}
\end{align}
it has been proved that the generator implicitly learns the true distribution. Indeed, $p_{gen} = p_{data}$ holds when the equilibrium is reached.

The adversarial training pushes the discriminator $D(\mathbf{x})$ towards the optimum value (see \cite{Goodfellow2014}) 
\begin{equation}
D^*(\mathbf{x}) = \frac{p_{data}(\mathbf{x})}{p_{data}(\mathbf{x})+p_{gen}(\mathbf{x})} = \frac{1}{1+\frac{p_{gen}(\mathbf{x})}{p_{data}(\mathbf{x})}},
\end{equation}
thus, given the value function in \eqref{eq:GAN}, the output of the optimum discriminator indirectly estimates the density ratio $p_{gen}/p_{data}$. 

\subsection{Indirect Discriminative Mutual Information Estimator}
The optimum discriminator (taken alone) in the GAN framework can be used as an indirect discriminative mutual information estimator (i-DIME). Since the calculation of the mutual information requires the density ratio \eqref{eq:density_ratio_1}, then $I(X;Y)$ can be indirectly estimated using the optimum discriminator $D^*$ when $p_{data}\equiv  p_{X}p_Y$ and $p_{gen} \equiv p_{XY}$. The following Lemma provides a mutual information estimator $I_{iDIME}(X;Y)$ that exploits the discriminator of the GAN value function. 

\begin{lemma}
\label{lemma:Lemma1}
Let $X\sim p_X(\mathbf{x})$ and $Y\sim p_{Y}(\mathbf{y}|\mathbf{x})$ be the channel input and output, respectively, such that $Y = H(X)$. Let $H(\cdot)$ be the stochastic channel model and $\pi(\cdot)$ be a permutation function such that  $p_{\pi(Y)}(\pi(\mathbf{y})|\mathbf{x}) = p_{Y}(\mathbf{y})$. 
If $\mathcal{J}_{}(D)$ is a value function defined as 
\begin{align}
\mathcal{J}_{}(D) = & \; \mathbb{E}_{\mathbf{x} \sim p_{X}(\mathbf{x})}\biggl[\log \biggl(D\biggl(\mathbf{x},\pi(H(\mathbf{x}))\biggr)\biggr)\biggr] \nonumber \\
& + \mathbb{E}_{\mathbf{x} \sim p_{X}(\mathbf{x})}\biggl[\log \biggl(1-D\biggl(\mathbf{x},H(\mathbf{x})\biggr)\biggr)\biggr],
\label{eq:discriminator_function}
\end{align}
then
\begin{equation}
\label{eq:optimal_discriminator_1}
D^*(\mathbf{x},\mathbf{y}) = \frac{p_{X}(\mathbf{x})\cdot p_Y(\mathbf{y})}{p_{XY}(\mathbf{x},\mathbf{y})+p_{X}(\mathbf{x})\cdot p_Y(\mathbf{y})} = \arg \max_D \mathcal{J}(D),
\end{equation}
and
\begin{align}
\label{eq:i-DMIE}
I(X;Y) & = I_{iDIME}(X;Y) \nonumber \\
& =  \mathbb{E}_{(\mathbf{x},\mathbf{y}) \sim p_{XY}(\mathbf{x},\mathbf{y})}\biggl[\log \biggl(\frac{1-D^*(\mathbf{x},\mathbf{y})}{D^*(\mathbf{x},\mathbf{y})} \biggr)\biggr].
\end{align}
\end{lemma}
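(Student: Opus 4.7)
The plan is to reduce Lemma~\ref{lemma:Lemma1} to a pointwise optimization of the GAN-style value function along the lines of Goodfellow's original argument, and then to verify by direct computation that the particular functional of $D^*$ appearing in \eqref{eq:i-DMIE} reproduces the log density-ratio whose $p_{XY}$-expectation is $I(X;Y)$.

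First I would rewrite the two expectations in \eqref{eq:discriminator_function} as integrals over $\mathcal{X}\times\mathcal{Y}$. The hypothesis on the permutation $\pi$ says exactly that $(\mathbf{x},\pi(H(\mathbf{x})))$ is distributed as $p_X(\mathbf{x})\,p_Y(\mathbf{y})$, i.e.\ the second coordinate is decoupled from the first, while $(\mathbf{x},H(\mathbf{x}))$ is distributed as $p_{XY}(\mathbf{x},\mathbf{y})$. Substituting these laws yields
$$\mathcal{J}(D)=\int\!\!\int\bigl[\,p_X(\mathbf{x})p_Y(\mathbf{y})\log D(\mathbf{x},\mathbf{y})+p_{XY}(\mathbf{x},\mathbf{y})\log\bigl(1-D(\mathbf{x},\mathbf{y})\bigr)\bigr]\,d\mathbf{x}\,d\mathbf{y}.$$
For each fixed $(\mathbf{x},\mathbf{y})$ with $a:=p_X(\mathbf{x})p_Y(\mathbf{y})$ and $b:=p_{XY}(\mathbf{x},\mathbf{y})$, the scalar map $d\mapsto a\log d+b\log(1-d)$ is strictly concave on $(0,1)$ and is maximized uniquely at $d^*=a/(a+b)$; this is exactly the expression for $D^*(\mathbf{x},\mathbf{y})$ in \eqref{eq:optimal_discriminator_1}. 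Measurability of the resulting $D^*$ is immediate from the measurability of the densities.

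To finish, a one-line calculation from \eqref{eq:optimal_discriminator_1} gives
$$\frac{1-D^*(\mathbf{x},\mathbf{y})}{D^*(\mathbf{x},\mathbf{y})}=\frac{p_{XY}(\mathbf{x},\mathbf{y})}{p_X(\mathbf{x})\,p_Y(\mathbf{y})},$$
so taking the logarithm and the expectation with respect to $p_{XY}$ reproduces \eqref{eq:mutual_information} and yields $I_{iDIME}(X;Y)=I(X;Y)$. The only step needing any care is a mild support condition — namely that $p_{XY}$ is absolutely continuous with respect to $p_X p_Y$ — which is the standard prerequisite for $I(X;Y)$ to be finite and for the density ratio to be well-defined $p_{XY}$-almost everywhere; I do not expect this to be a real obstacle. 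Everything else is a direct specialization of the classical GAN argument to the "joint vs.\ product of marginals" coupling.
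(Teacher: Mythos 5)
Your proposal is correct and follows essentially the same route as the paper's own proof: rewrite the two expectations as integrals against $p_X p_Y$ and $p_{XY}$, optimize the integrand pointwise in $D$ (the paper does this via first- and second-derivative conditions, you via strict concavity of $d\mapsto a\log d + b\log(1-d)$ on $(0,1)$, which is the same computation), and then read off $(1-D^*)/D^* = p_{XY}/(p_X p_Y)$ to conclude. Your added remark on absolute continuity of $p_{XY}$ with respect to $p_X p_Y$ is a reasonable precaution the paper leaves implicit.
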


Furthermore,  when $D(\mathbf{x},\mathbf{y})$ is parameterized by a neural network, the activation function of the last layer of a GAN discriminator is often chosen as the sigmoid function.  Therefore,  at the equilibrium
\begin{equation}
 D^*(\mathbf{x},\mathbf{y}; \theta) = \frac{1}{1+\exp(-\mathbf{W}^*_L \cdot \mathbf{z}_{L-1}-b^*_L)},
\end{equation}
where $\theta$ represents the parameters (weights and biases) of the neural network and $ \mathbf{z}_{L-1}$ is the output of the $(L-1)$-th layer.  Substituting the last layer output into the i-DIME estimator in \eqref{eq:i-DMIE} yields
\begin{equation}
I_{iDIME}(X;Y) =  -\mathbb{E}_{(\mathbf{x},\mathbf{y}) \sim p_{XY}(\mathbf{x},\mathbf{y})}\biggl[ \mathbf{W}^*_L \cdot \mathbf{z}_{L-1}+b^*_L  \biggr],
\end{equation}
thus, the second-to-last layer of a GAN discriminator provides the log-density ratio.

The following Lemma guarantees that the convergence of the indirect estimator $\hat{I}_{n,iDIME}(X;Y)$ is controlled by the convergence of $D$ towards the optimum GAN discriminator $D^*$ while optimizing $\mathcal{J}_{}(D)$.
\begin{lemma}
\label{lemma:Lemma2}
Let the discriminator $D(\cdot)$ be with enough capacity, i.e., in the non parametric limit. Consider the problem
\begin{equation}
D^* =  \; \arg \max_D \mathcal{J}_{}(D)
\label{eq:Lemma2_problem}
\end{equation}
where
\begin{align}
\mathcal{J}_{}(D) = & \; \mathbb{E}_{(\mathbf{x},\mathbf{y}) \sim p_{X}(\mathbf{x})\cdot p_Y(\mathbf{y})}\biggl[\log \biggl(D(\mathbf{x},\mathbf{y})\biggr)\biggr] \nonumber \\
& + \mathbb{E}_{(\mathbf{x},\mathbf{y}) \sim p_{XY}(\mathbf{x},\mathbf{y})}\biggl[\log \biggl(1-D(\mathbf{x},\mathbf{y})\biggr)\biggr],
\end{align}
and the update rule based on the gradient descent method
\begin{equation}
D^{(n+1)} = D^{(n)} + \mu \nabla \mathcal{J}_{}(D^{(n)}).
\end{equation}
If the gradient descent method converges to the global optimum $D^*$, the mutual information estimator 
\begin{equation}
\hat{I}_{n,iDIME}(X;Y) = \mathbb{E}_{(\mathbf{x},\mathbf{y}) \sim p_{XY}(\mathbf{x},\mathbf{y})}\biggl[\log \biggl(\frac{1-D^{(n)}(\mathbf{x},\mathbf{y})}{D^{(n)}(\mathbf{x},\mathbf{y})} \biggr)\biggr]
\end{equation}
converges to the real value of the mutual information $I(X;Y)$.
\end{lemma}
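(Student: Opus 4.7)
The plan is to combine Lemma 1 with a continuous-mapping argument for the estimator functional
\begin{equation*}
\Phi(D) \;=\; \mathbb{E}_{(\mathbf{x},\mathbf{y})\sim p_{XY}(\mathbf{x},\mathbf{y})}\biggl[\log\biggl(\frac{1-D(\mathbf{x},\mathbf{y})}{D(\mathbf{x},\mathbf{y})}\biggr)\biggr].
\end{equation*}
Lemma 1 already identifies the unique non-parametric maximizer $D^{\ast}$ of $\mathcal{J}(D)$ and shows $\Phi(D^{\ast})=I(X;Y)$, so the statement reduces to verifying $\Phi(D^{(n)})\to \Phi(D^{\ast})$ whenever the gradient descent iterates $D^{(n)}$ converge to $D^{\ast}$.

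First I would upgrade the hypothesis to a quantitative mode of convergence. Using the pointwise strict concavity of $\mathcal{J}$ in $D$ (its second variation is $-p_X p_Y/D^2 - p_{XY}/(1-D)^2$, as noted inside the argument for Lemma 1), I would expand $\mathcal{J}(D^{\ast})-\mathcal{J}(D^{(n)})$ around the optimum and recognise it as a weighted Bernoulli Kullback--Leibler integral between $D^{\ast}$ and $D^{(n)}$ with weight proportional to $p_{XY}+p_X p_Y$. Combined with Pinsker's inequality, convergence in $\mathcal{J}$-value then forces $D^{(n)}\to D^{\ast}$ in $L^2$ against this weight and hence $p_{XY}$-almost-everywhere along a subsequence, while preventing any accumulation of mass at the singular log-ratio values $\{0,1\}$.

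Next I would push this convergence through $\Phi$ by dominated convergence. Finiteness of $I(X;Y)$ guarantees both that $D^{\ast}\in(0,1)$ $p_{XY}$-a.e.\ and that $\log((1-D^{\ast})/D^{\ast})=\log(p_{XY}/(p_X p_Y))$ is $p_{XY}$-integrable. On a tube around $D^{\ast}$ eventually bounded away from $\{0,1\}$ on the effective support, the integrand $\log((1-D^{(n)})/D^{(n)})$ converges pointwise and is dominated by $|\log(p_{XY}/(p_X p_Y))|$ plus a fixed constant, which yields $\Phi(D^{(n)})\to\Phi(D^{\ast})=I(X;Y)$.

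The main obstacle is precisely the boundary singularity of $\log((1-D)/D)$: pointwise convergence of $D^{(n)}$ does not automatically imply $L^1(p_{XY})$ convergence of the transformed sequence, since a vanishing but non-negligible fraction of iterates could in principle leak toward $\{0,1\}$. Ruling this out is exactly what the divergence-based tube estimate above is designed for; once the resulting uniform integrability is secured, the remainder is a routine application of dominated convergence.
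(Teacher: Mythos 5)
Your proposal is correct in outline but follows a genuinely different route from the paper. The paper's proof is a first-order perturbation analysis: it sets $\delta^{(n)}=D^*-D^{(n)}$, computes the exact identity
\begin{equation*}
\hat{I}_{n,iDIME}(X;Y)=I(X;Y)+\mathbb{E}_{(\mathbf{x},\mathbf{y})\sim p_{XY}}\biggl[\log\biggl(1+\frac{\delta^{(n)}}{(1-D^*)(D^*-\delta^{(n)})}\biggr)\biggr],
\end{equation*}
and then linearizes $\log(1+x)\simeq x$ as $\delta^{(n)}\to 0$, pulling $\delta^{(n)}$ out of the expectation to obtain the explicit leading-order error $\delta^{(n)}\bigl(4+\chi^2(p_Xp_Y,p_{XY})\bigr)$. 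This buys an interpretable error coefficient (the estimation bias scales with the chi-square divergence between the product of marginals and the joint), at the cost of treating the displacement as a uniformly small scalar and implicitly assuming $\chi^2(p_Xp_Y,p_{XY})<\infty$ and the validity of interchanging the expansion with the integral. Your route instead upgrades convergence in $\mathcal{J}$-value to $L^2$ convergence via strong concavity and Pinsker, then passes to the limit through $\Phi$ by dominated convergence; this is structurally more careful about \emph{which} mode of convergence the hypothesis supplies and it names explicitly the boundary singularity at $D\in\{0,1\}$ that the paper's Taylor expansion glosses over. However, your domination step is not yet secured: bounding $\log((1-D^{(n)})/D^{(n)})$ by $|\log(p_{XY}/(p_Xp_Y))|$ plus a constant requires $D^{(n)}$ to stay within a uniform multiplicative corridor of $D^*$, which $L^2$ (or a.e.\ subsequential) convergence alone does not provide, and $D^*$ itself approaches $0$ wherever the density ratio is unbounded, so no tube bounded away from the boundary exists on the whole effective support in general. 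If you complete that uniform-integrability argument you will have a proof that is strictly more rigorous than the paper's; note that the paper's explicit constant suggests the natural sufficient condition is finiteness of $\chi^2(p_Xp_Y,p_{XY})$, which your argument would likely also need to invoke to control the weight $1/((1-D^*)D^*)$ against $p_{XY}$.
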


The maximization of the discriminator cost function in the GAN framework does not directly relate with a mutual information maximization. This is a consequence of the fact that the optimum GAN discriminator is not directly in the form needed for the mutual information estimation, namely, $p_{XY}/(p_X p_Y)$. To have an optimum discriminator in the desired density ratio form, in the next section we introduce the alpha-parameterized cost function $J_{\alpha}(D)$.

\subsection{Direct Discriminative Mutual Information Estimator}
The indirect GAN-based mutual information estimator (i-DIME) introduced in \eqref{eq:i-DMIE} has the advantage of exploiting the representation via the second-to-last layer,  nevertheless,  it suffers from two issues: 1)  The associated cost function does not directly maximize a lower bound on the mutual information.  2)  Even at the equilibrium,  if $p_{XY} >> p_X p_Y$ (i.e.  high values of the mutual information) or $p_{XY} << p_X p_Y$ (i.e.  low values of the mutual information),  $D^{(n)}(\mathbf{x},\mathbf{y})$ may saturate to either $0$ or $1$, providing unstable estimations.
To simultaneously tackle both issues,  we firstly design a cost function whose maximization coincides with a mutual information estimation, hence an optimum discriminator of the form $p_{XY}/(p_X p_Y)$.  Secondly,  we parameterize the new cost function via a positive constant $\alpha$ with the aim of controlling the estimation.  In particular,  $\alpha$ plays the role of a modulation factor that can be chosen to either amplify or attenuate vanishing or exploding density ratios,  respectively. 

The following Lemma introduces the alpha-parameterized cost function and sets the mathematical foundation behind the direct discriminative mutual information estimator (d-DIME).
\begin{lemma}
\label{lemma:Lemma3}
Let $X\sim p_X(\mathbf{x})$ and $Y\sim p_{Y}(\mathbf{y}|\mathbf{x})$ be the channel input and output, respectively, such that $Y = H(X)$. Let $H(\cdot)$ be the stochastic channel model and $\pi(\cdot)$ be a permutation function such that  $p_{\pi(Y)}(\pi(\mathbf{y})|\mathbf{x}) = p_{Y}(\mathbf{y})$.
If $\mathcal{J}_{\alpha}(D)$, $\alpha>0$, is a value function defined as 
\begin{align}
\mathcal{J}_{\alpha}(D) = \; & \alpha \cdot \mathbb{E}_{\mathbf{x} \sim p_{X}(\mathbf{x})}\biggl[\log \biggl(D\biggl(\mathbf{x},H(\mathbf{x})\biggr)\biggr)\biggr] \nonumber \\
& +\mathbb{E}_{\mathbf{x} \sim p_{X}(\mathbf{x})}\biggl[-D\biggl(\mathbf{x},\pi(H(\mathbf{x}))\biggr)\biggr],
\label{eq:discriminator_function}
\end{align}
then
\begin{equation}
\label{eq:optimal_discriminator_2}
D^*(\mathbf{x},\mathbf{y}) = \alpha \cdot \frac{p_{XY}(\mathbf{x},\mathbf{y})}{p_{X}(\mathbf{x})\cdot p_Y(\mathbf{y})} = \arg \max_D \mathcal{J}_{\alpha}(D).
\end{equation}
and
\begin{align}
I(X;Y) & = I_{dDIME}(X;Y) \nonumber \\
& = \mathbb{E}_{(\mathbf{x},\mathbf{y}) \sim p_{XY}(\mathbf{x},\mathbf{y})}\biggl[\log \biggl(\frac{D^*(\mathbf{x},\mathbf{y})}{\alpha} \biggr)\biggr]
\end{align}
or alternatively,
\begin{equation}
\label{eq:alpha_mutual_information}
I_{dDIME}(X;Y) = \frac{\mathcal{J}_{\alpha}(D^*)}{\alpha}+1-\log(\alpha).
\end{equation}
\end{lemma}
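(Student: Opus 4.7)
The plan is to mirror the structure of the proof of Lemma \ref{lemma:Lemma1}: rewrite $\mathcal{J}_{\alpha}(D)$ as a double integral against $p_{XY}$ and $p_X p_Y$, maximize pointwise in $D(\mathbf{x},\mathbf{y})$ using ordinary calculus, and then plug the optimizer back in.

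First I would use the two distributional hypotheses on $(X,H(X))$ and on $(X,\pi(H(X)))$ to recognize the first expectation in \eqref{eq:discriminator_function} as an expectation over $p_{XY}$ and the second one as an expectation over the product $p_X\,p_Y$. This rewrites the value function as
\begin{align*}
\mathcal{J}_{\alpha}(D) = \; & \alpha \int\!\!\int p_{XY}(\mathbf{x},\mathbf{y})\log D(\mathbf{x},\mathbf{y})\,\diff\mathbf{x}\diff\mathbf{y} \\
& - \int\!\!\int p_X(\mathbf{x})\,p_Y(\mathbf{y})\,D(\mathbf{x},\mathbf{y})\,\diff\mathbf{x}\diff\mathbf{y}.
\end{align*}
Since the integrand depends on $D$ only through its pointwise value, I would differentiate with respect to $D(\mathbf{x},\mathbf{y})$ and set the derivative to zero, obtaining $\alpha\,p_{XY}/D - p_X p_Y = 0$, whose solution yields $D^*(\mathbf{x},\mathbf{y}) = \alpha\,p_{XY}(\mathbf{x},\mathbf{y})/(p_X(\mathbf{x})\,p_Y(\mathbf{y}))$, matching \eqref{eq:optimal_discriminator_2}. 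A quick second-derivative check, $-\alpha p_{XY}/D^2 < 0$, confirms this is indeed a maximum (and the functional is concave in $D$ on the positive cone).

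Next I would establish the estimator identity. Dividing $D^*$ by $\alpha$ gives exactly the density ratio \eqref{eq:density_ratio_1}, so taking logarithms and an expectation under $p_{XY}$ reproduces the definition \eqref{eq:mutual_information} of $I(X;Y)$, giving the first form of $I_{dDIME}(X;Y)$. For the alternative expression \eqref{eq:alpha_mutual_information}, I would substitute $D^*$ into $\mathcal{J}_{\alpha}$: the first term becomes $\alpha\,\mathbb{E}_{p_{XY}}[\log\alpha + \log(p_{XY}/(p_X p_Y))] = \alpha\log\alpha + \alpha\,I(X;Y)$, while the second term simplifies as
\begin{equation*}
\mathbb{E}_{p_X p_Y}\!\left[\alpha\,\frac{p_{XY}}{p_X p_Y}\right] = \alpha \int\!\!\int p_{XY}(\mathbf{x},\mathbf{y})\,\diff\mathbf{x}\diff\mathbf{y} = \alpha,
\end{equation*}
so $\mathcal{J}_{\alpha}(D^*) = \alpha\log\alpha + \alpha\,I(X;Y) - \alpha$, which after rearrangement yields \eqref{eq:alpha_mutual_information}.

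The computation is essentially routine once the rewriting step is done; I do not expect a genuine obstacle. The only subtlety worth flagging is that the pointwise maximization over $D$ is valid only in the non-parametric (sufficient-capacity) regime, and one must tacitly restrict to positive-valued $D$ so that $\log D$ is defined and the stationary point $D^*>0$ is actually attained; the text's phrasing makes this implicit, and a brief remark should suffice.
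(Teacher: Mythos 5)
Your proposal is correct and follows essentially the same route as the paper's own proof: rewrite the two expectations over $p_{XY}$ and $p_X p_Y$, maximize the integrand pointwise to get $D^*=\alpha\, p_{XY}/(p_X p_Y)$ with the same second-derivative check, and then substitute $D^*$ back into $\mathcal{J}_{\alpha}$, noting that the second term equals $-\alpha$, to obtain $\mathcal{J}_{\alpha}(D^*)=\alpha\log\alpha-\alpha+\alpha I(X;Y)$. Your closing remark about restricting to positive $D$ in the non-parametric limit is a reasonable caveat that the paper leaves implicit.
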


Conversely to the i-DIME estimator, this time the maximization of the discriminator cost function directly relates with the mutual information estimation as show in \eqref{eq:alpha_mutual_information}.  Therefore,  the following inequality holds
\begin{align}
I_{dDIME}(X;Y) & = \frac{\mathcal{J}_{\alpha}(D^*)}{\alpha}+1-\log(\alpha) \nonumber \\
& \geq \frac{\mathcal{J}_{\alpha}(D)}{\alpha}+1-\log(\alpha) = \tilde{I}_{dDIME}(X;Y),
\end{align}
concluding that for any type of positive discriminator function $D(\mathbf{x},\mathbf{y})$,  $\mathcal{J}_{\alpha}(D)/ \alpha + 1 - \log(\alpha)$ is a lower bound on the mutual information $I(X;Y)$.  Such lower bound will be better investigated in the context of variational lower bounds in Sec.\ref{sec:fenchel}.

The following Lemma guarantees that the convergence of the direct estimator $\hat{I}_{n,dDIME}(X;Y)$ is controlled by the convergence of $D$ towards the optimum discriminator $D^*$ while optimizing $\mathcal{J}_{\alpha}(D)$.

\begin{lemma}
\label{lemma:Lemma4}
Let the discriminator $D(\cdot)$ be with enough capacity, i.e., in the non parametric limit. Consider the problem
\begin{equation}
D^* =  \; \arg \max_D \mathcal{J}_{\alpha}(D)
\label{eq:Lemma3_problem}
\end{equation}
where
\begin{align}
\mathcal{J}_{\alpha}(D) = \; & \alpha \cdot \mathbb{E}_{(\mathbf{x},\mathbf{y}) \sim p_{XY}(\mathbf{x},\mathbf{y})}\biggl[\log \biggl(D(\mathbf{x},\mathbf{y})\biggr)\biggr] \nonumber \\
& +\mathbb{E}_{(\mathbf{x},\mathbf{y}) \sim p_{X}(\mathbf{x})\cdot p_Y(\mathbf{y})}\biggl[-D(\mathbf{x},\mathbf{y})\biggr],
\end{align}
and the update rule based on the gradient descent method
\begin{equation}
D^{(n+1)} = D^{(n)} + \mu \nabla \mathcal{J}_{\alpha}(D^{(n)}).
\end{equation}
If the gradient descent method converges to the global optimum $D^*$, the mutual information estimator 
\begin{equation}
\label{eq:hat_idMIE}
\hat{I}_{n,dDIME}(X;Y) = \mathbb{E}_{(\mathbf{x},\mathbf{y}) \sim p_{XY}(\mathbf{x},\mathbf{y})}\biggl[\log \biggl(\frac{D^{(n)}(\mathbf{x},\mathbf{y})}{\alpha} \biggr)\biggr]
\end{equation}
converges to the real value of the mutual information $I(X;Y)$.
\end{lemma}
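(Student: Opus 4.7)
The plan is to mirror the argument used for Lemma 2 (the i-DIME convergence), but now applied to the $\alpha$-parameterized cost function of Lemma 3. By Lemma 3, the unique maximizer of $\mathcal{J}_{\alpha}$ is $D^*(\mathbf{x},\mathbf{y}) = \alpha \cdot p_{XY}(\mathbf{x},\mathbf{y})/(p_{X}(\mathbf{x}) p_{Y}(\mathbf{y}))$, and for this $D^*$ one has $\mathbb{E}_{p_{XY}}[\log(D^*/\alpha)] = I(X;Y)$. So what remains is to show that the plug-in quantity $\hat{I}_{n,dDIME}(X;Y)$ inherits this identity continuously as $D^{(n)} \to D^*$ under gradient descent.

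First I would introduce the displacement $\delta^{(n)}(\mathbf{x},\mathbf{y}) := D^*(\mathbf{x},\mathbf{y}) - D^{(n)}(\mathbf{x},\mathbf{y})$, which by the convergence hypothesis tends to zero (pointwise in the non-parametric limit). Substituting $D^{(n)} = D^* - \delta^{(n)}$ into \eqref{eq:hat_idMIE} and factoring $D^*/\alpha$ out of the logarithm yields
\begin{align}
\hat{I}_{n,dDIME}(X;Y) &= \mathbb{E}_{p_{XY}}\biggl[\log\biggl(\frac{D^* - \delta^{(n)}}{\alpha}\biggr)\biggr] \nonumber \\
&= I(X;Y) + \mathbb{E}_{p_{XY}}\biggl[\log\biggl(1 - \frac{\delta^{(n)}}{D^*}\biggr)\biggr].
\end{align}
A first-order Taylor expansion of $\log(1-u)$ around $u=0$ then gives the leading residual
\begin{equation}
\hat{I}_{n,dDIME}(X;Y) \simeq I(X;Y) - \mathbb{E}_{p_{XY}}\biggl[\frac{\delta^{(n)}}{D^*}\biggr] + O\bigl((\delta^{(n)})^2\bigr).
\end{equation}

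The residual can be evaluated in closed form using $p_{XY}/D^* = p_X p_Y / \alpha$, which is the content of Lemma 3. This collapses the expectation into
\begin{equation}
\mathbb{E}_{p_{XY}}\biggl[\frac{\delta^{(n)}}{D^*}\biggr] = \frac{1}{\alpha}\int\!\!\int \delta^{(n)}(\mathbf{x},\mathbf{y})\, p_X(\mathbf{x}) p_Y(\mathbf{y}) \diff\mathbf{x} \diff\mathbf{y},
\end{equation}
which is $1/\alpha$ times the mean of $\delta^{(n)}$ under $p_X p_Y$ and vanishes as $D^{(n)} \to D^*$ in $L^1(p_X p_Y)$. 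Combining this with a dominated-convergence bound on the $O((\delta^{(n)})^2)$ term gives $|I(X;Y) - \hat{I}_{n,dDIME}(X;Y)| \to 0$ as $n \to \infty$.

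The main obstacle I foresee is the usual subtlety of interchanging limit and expectation for a logarithmic integrand: if $D^{(n)}$ approaches zero on a set where $p_{XY}>0$, the term $\log(1 - \delta^{(n)}/D^*)$ can blow up and wreck the linearization. Handling this requires invoking the non-parametric limit to ensure that, for $n$ large enough, $\delta^{(n)}/D^*$ is uniformly bounded away from $1$ on the support of $p_{XY}$ (noting $D^* > 0$ wherever $p_{XY}>0$), so that the Taylor remainder admits an integrable dominating function. This is the same technical point glossed over in the commented-out proof of Lemma 2, and once it is accepted the convergence of $\hat{I}_{n,dDIME}$ to $I(X;Y)$ is an immediate consequence of $\delta^{(n)}\to 0$.
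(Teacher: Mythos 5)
Your proposal follows essentially the same route as the paper's own proof: define the displacement $\delta^{(n)}=D^*-D^{(n)}$, factor $D^*/\alpha$ out of the logarithm to isolate $I(X;Y)$ plus a residual $\mathbb{E}_{p_{XY}}[\log(1-\delta^{(n)}/D^*)]$, linearize, and use $p_{XY}/D^*=p_Xp_Y/\alpha$ to collapse the leading correction to a term of order $\delta^{(n)}/\alpha$. Your version is in fact slightly more careful than the paper's (which treats $\delta^{(n)}$ as a scalar pulled outside the expectation, whereas you keep it as a function and note the dominated-convergence issue), but the underlying argument is the same.
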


Both i-DIME and d-DIME have been introduced as pure estimators given a certain input data. However, in communications, the transmitted samples/signals are the result of an encoding scheme that aims at approaching channel capacity and whose distribution has to be obtained through the maximization of the mutual information. In the following section, we propose a cooperative framework that combines an encoder/generator component with a d-DIME block to produce capacity-approaching codes. This provides a constructive method to estimate the channel capacity.

\section{CORTICAL: Cooperative Networks for Capacity Learning}
\label{sec:capacity}

\begin{figure*}[h]
	\centering
	\includegraphics[scale=0.43]{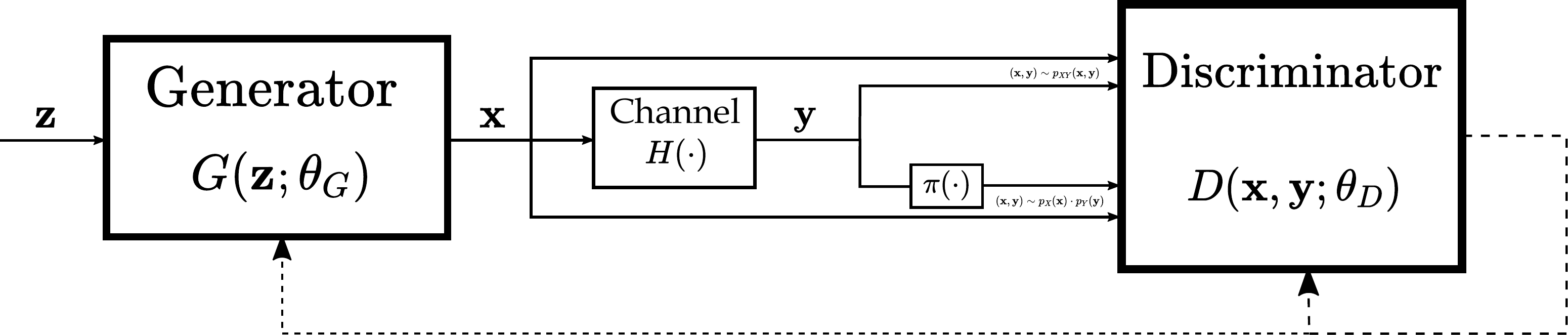}
	\caption{CORTICAL, Cooperative framework for capacity learning: a generator produces input samples with distribution $p_X(\mathbf{x})$ and a discriminator attempts to distinguish between paired and unpaired channel input-output samples.}
	\label{fig:Cooperative_networks}
\end{figure*} 

In the following, we propose a direct DIME-based cooperative framework (CORTICAL) to learn the channel capacity (see \eqref{eq:capacity}). The approach is constructive in the sense that not only the capacity is estimated but also the optimum code that reaches it is built. We identify the encoder block as a generator whose objective is to produce channel input samples for which a discriminator exhibits the best performance in distinguishing (in the KL sense)  paired and unpaired channel input-output samples. Fig.\ref{fig:Cooperative_networks} illustrates the cooperative framework that learns both the channel input distribution $p_X(\mathbf{x})$ and the channel capacity $C$, given the channel model $H(\cdot)$.
The value function of the cooperative training scheme is discussed next.

\begin{theorem}
\label{theorem:Theorem1}
Let $X\sim p_X(\mathbf{x})$ and $Y\sim p_{Y}(\mathbf{y}|\mathbf{x})$ be the channel input and output, respectively, such that $X = G(Z)$ with $Z\sim p_Z(\mathbf{z})$. Let $Y = H(X)$ with $H(\cdot)$ being the stochastic channel model and let $\pi(\cdot)$ be a permutation function such that $p_{\pi(Y)}(\pi(\mathbf{y})|\mathbf{x}) = p_{Y}(\mathbf{y})$. 
If $\mathcal{J}_{\alpha}(G,D)$, $\alpha>0$, is a value function defined as 
\begin{align}
\mathcal{J}_{\alpha}(G,D) = \; & \alpha \cdot \mathbb{E}_{\mathbf{z} \sim p_{Z}(\mathbf{z})}\biggl[\log \biggl(D\biggl(G(\mathbf{z}),H(G(\mathbf{z}))\biggr)\biggr)\biggr] \nonumber \\
& +\mathbb{E}_{\mathbf{z} \sim p_{Z}(\mathbf{z})}\biggl[-D\biggl(G(\mathbf{z}),\pi(H(G(\mathbf{z})))\biggr)\biggr],
\label{eq:value_function}
\end{align}
then the channel capacity $C$ is the solution of
\begin{equation}
C = \max_{G} \max_{D} \frac{\mathcal{J}_{\alpha}(G,D)}{\alpha} + 1- \log(\alpha)
\end{equation}
\end{theorem}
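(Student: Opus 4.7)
The plan is a two-level decomposition of the optimization that reduces the theorem to Lemma~\ref{lemma:Lemma3} applied at each fixed generator, followed by a generator-level argument that any admissible input distribution is attainable.

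First I would fix the generator $G$ arbitrarily and note that this induces a particular channel input distribution $p_X$, with $X = G(Z)$ and $Z\sim p_Z(\mathbf{z})$. Rewriting the value function via the change of variables $\mathbf{x} = G(\mathbf{z})$, the expectations in \eqref{eq:value_function} become
\begin{align*}
\mathcal{J}_\alpha(G,D) = \;& \alpha\cdot\mathbb{E}_{\mathbf{x}\sim p_X(\mathbf{x})}\bigl[\log D(\mathbf{x},H(\mathbf{x}))\bigr] \\
& + \mathbb{E}_{\mathbf{x}\sim p_X(\mathbf{x})}\bigl[-D(\mathbf{x},\pi(H(\mathbf{x})))\bigr],
\end{align*}
which is exactly the cost function of Lemma~\ref{lemma:Lemma3}. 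Applying Lemma~\ref{lemma:Lemma3} therefore yields, for every fixed $G$,
\begin{equation*}
\max_D \; \frac{\mathcal{J}_\alpha(G,D)}{\alpha} + 1 - \log(\alpha) \;=\; I_{dDIME}(X;Y) \;=\; I(X;Y),
\end{equation*}
where the mutual information is evaluated under the input law induced by $G$.

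Next, I would take the outer maximum over $G$. Under the standing assumption that $G$ has enough capacity (non-parametric limit) and that $p_Z$ is sufficiently rich, the pushforward $G_{\#}p_Z$ can realize any admissible input distribution $p_X$; this is the standard implicit-generative-model hypothesis already invoked for the discriminator in Lemma~\ref{lemma:Lemma3}. Hence
\begin{equation*}
\max_G \max_D \; \frac{\mathcal{J}_\alpha(G,D)}{\alpha} + 1 - \log(\alpha) \;=\; \max_{p_X(\mathbf{x})} I(X;Y) \;=\; C,
\end{equation*}
by the definition of channel capacity in \eqref{eq:capacity}.

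The main obstacle is the generator-level step rather than the discriminator-level one: Lemma~\ref{lemma:Lemma3} handles the inner maximum cleanly, but the outer equality $\max_G I(X;Y) = \max_{p_X} I(X;Y)$ requires that the family $\{G_{\#}p_Z : G\in\mathcal{G}\}$ exhausts the set of feasible input distributions. In the non-parametric limit this is routine, but when practical input constraints (e.g.\ power, peak-amplitude, alphabet) are present these must be reflected either in the support of $p_Z$ or in a restriction on $\mathcal{G}$, and any $\max$ should strictly be read as $\sup$ in case the capacity-achieving distribution is not realized by a continuous pushforward. I would simply state the non-parametric assumption explicitly, mirroring the convention used throughout the preceding lemmas, and conclude the proof.
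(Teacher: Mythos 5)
Your proposal matches the paper's own proof: both fix $G$, rewrite the value function over the induced $(p_{XY}, p_X p_Y)$ pair, invoke the d-DIME lemma to identify the inner maximum with $I(X;Y)$ up to the affine correction, and then identify the outer maximization over $G$ with the maximization over $p_X(\mathbf{x})$ in \eqref{eq:capacity}. Your explicit caveat that the pushforward family $\{G_{\#}p_Z\}$ must exhaust the admissible input distributions (and that $\max$ should be read as $\sup$) is a point the paper leaves implicit, but the argument is otherwise the same.
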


The value function in \eqref{eq:value_function} can be interpreted as follows: given a certain source $\mathbf{z}$ with continuous or discrete (finite input alphabet of dimension $M$) probability density/mass function $p_Z(\mathbf{z})$, the generator maps the source into a code $\mathbf{x} = G(\mathbf{z})$ of density $p_X(\mathbf{x})$. Consequently, the generator and discriminator parameters adapt during the training process to produce paired channel input-output samples $(G(\mathbf{z}),H(G(\mathbf{z})))$ that maximize $\mathcal{J}_{\alpha}(G,D)$, thus, the mutual information $I(X;Y)$.

\section{Alpha Variational Representation of the Mutual Information}
\label{sec:fenchel}
In this section, we provide a variational interpretation of the estimation of the mutual information using the alpha-parameterized value function introduced in \eqref{eq:discriminator_function}. Inspired by the $f$-divergence variational estimation method proposed in \cite{Nguyen2010}, we exploit the Fenchel conjugate method to alternatively derive \eqref{eq:alpha_mutual_information}.
In particular, we show that \eqref{eq:alpha_mutual_information} constitutes a lower bound on the mutual information. Its maximization over positive discriminators $D$ (as proposed in Lemma 3) leads to the mutual information $I(X;Y)$. 
 
We firstly compute the Fenchel conjugate function $f^*(t)$ of $f(x)=-\alpha \cdot \log(u)$ with $\alpha>0$, and we use this result to recognize the alpha-parameterized value function in \eqref{eq:alpha_mutual_information} as a variational representation of the KL divergence. For the conjugate derivation please see Lemma \ref{lemma:Lemma5} in the Appendix.

The next theorem introduces the alpha variational lower bound of the mutual information. The proof follows some ideas presented by Nguyen et al. \cite{Nguyen2010} for the general $f$-divergence variational representation. 

\begin{theorem}
\label{theorem:Theorem2}
For any class of functions $\mathcal{D}$ mapping from the sample domain $\mathcal{X}$ to $\mathbb{R_+}$, the following lower bound on the mutual information holds with $\alpha>0$
\begin{align}
I(X;Y) \geq & \sup_{D\in \mathcal{D_+}} \biggl\{\mathbb{E}_{(\mathbf{x},\mathbf{y}) \sim p_{XY}(\mathbf{x},\mathbf{y})}\bigl[\log (D(\mathbf{x},\mathbf{y}))\bigr] \nonumber \\
& - \frac{1}{\alpha}\cdot \mathbb{E}_{(\mathbf{x},\mathbf{y}) \sim p_{X}(\mathbf{x})\cdot p_{Y}(\mathbf{y})}\bigl[D(\mathbf{x},\mathbf{y})\bigr] +1 -\log(\alpha)\biggr\}
\end{align}
\end{theorem}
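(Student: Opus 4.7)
The plan is to identify $\alpha\cdot I(X;Y)$ as an $f$-divergence with the specific convex generator $f(u)=-\alpha\log u$, and then apply the standard Fenchel–Legendre variational machinery of Nguyen–Wainwright–Jordan to produce the lower bound. Concretely, I first write
\begin{equation*}
\alpha\cdot I(X;Y) \;=\; \int p_{XY}(\mathbf{x},\mathbf{y})\,\Bigl(-\alpha\log\tfrac{p_X(\mathbf{x})\,p_Y(\mathbf{y})}{p_{XY}(\mathbf{x},\mathbf{y})}\Bigr)\,\diff\mathbf{x}\diff\mathbf{y},
\end{equation*}
so that $\alpha\cdot I(X;Y)=D_f(p_X p_Y \Vert p_{XY})$ with $f(u)=-\alpha\log u$ integrated against $p_{XY}$ at the ratio $u=p_X p_Y/p_{XY}$. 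This casts the problem in a form where Lemma~\ref{lemma:Lemma5} directly supplies the conjugate.

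Next, since $f$ is convex and lower-semicontinuous, Fenchel biduality gives $f(u)=\sup_{t}\{tu-f^{*}(t)\}$, which yields the pointwise inequality $f(u)\geq T u - f^{*}(T)$ for every admissible $T$. Integrating this pointwise bound against $p_{XY}$ and collapsing the telescoping density ratio produces the general variational lower bound
\begin{equation*}
\alpha\cdot I(X;Y) \;\geq\; \mathbb{E}_{p_X p_Y}\bigl[T(\mathbf{x},\mathbf{y})\bigr] - \mathbb{E}_{p_{XY}}\bigl[f^{*}(T(\mathbf{x},\mathbf{y}))\bigr],
\end{equation*}
valid for any measurable $T$ taking values in $\mathrm{dom}(f^{*})$.

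Now I invoke Lemma~\ref{lemma:Lemma5}, which identifies $\mathrm{dom}(f^{*})=\mathbb{R}_{-}$ and $f^{*}(t)=-\alpha-\alpha\log(-t/\alpha)$. To expose a positive-valued witness, I perform the change of variable $T(\mathbf{x},\mathbf{y})=-D(\mathbf{x},\mathbf{y})$ with $D\in\mathcal{D}_{+}$, which gives $f^{*}(-D)=-\alpha-\alpha\log(D/\alpha)=-\alpha-\alpha\log D+\alpha\log\alpha$. Plugging in and dividing by $\alpha>0$ yields, for every $D\in\mathcal{D}_{+}$,
\begin{equation*}
I(X;Y) \;\geq\; \mathbb{E}_{p_{XY}}\bigl[\log D(\mathbf{x},\mathbf{y})\bigr] - \tfrac{1}{\alpha}\mathbb{E}_{p_X p_Y}\bigl[D(\mathbf{x},\mathbf{y})\bigr] + 1 - \log(\alpha),
\end{equation*}
and taking the supremum over $D\in\mathcal{D}_{+}$ delivers the thesis.

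The one place that deserves care, and what I expect to be the main obstacle, is arguing that restricting the variational class to the positive functions $\mathcal{D}_{+}$ does not weaken the bound: one must check that the pointwise optimizer in the Fenchel step actually lies in the admissible image. Differentiating $tu - f^{*}(t)$ gives the stationary point $t^{*}=f'(u)=-\alpha/u$, which is strictly negative whenever $u>0$; equivalently, the optimal witness corresponds to $D^{*}(\mathbf{x},\mathbf{y})=\alpha\,p_{XY}(\mathbf{x},\mathbf{y})/(p_X(\mathbf{x})p_Y(\mathbf{y}))>0$, matching the optimal discriminator already identified in Lemma~\ref{lemma:Lemma3}. Hence the restriction to $\mathcal{D}_{+}$ is tight, the bound is attained at $D^{*}$, and the variational representation is consistent with the direct derivation of the d-DIME objective.
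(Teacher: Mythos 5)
Your proof is correct and follows essentially the same route as the paper's: both recognize the mutual information (up to the factor $\alpha$) as an $f$-divergence with generator $f(u)=-\alpha\log u$, apply the Fenchel--Young pointwise bound and the restriction to a function class to obtain the NWJ-style variational inequality, invoke Lemma~\ref{lemma:Lemma5} for the explicit conjugate, and substitute $T=-D$ before dividing by $\alpha$. Your closing tightness argument (that the pointwise optimizer $t^{*}=-\alpha/u$ corresponds to the positive discriminator $D^{*}=\alpha\,p_{XY}/(p_Xp_Y)$) is not in the paper's proof but correctly mirrors the remark made in the main text after the theorem.
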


It is easy to recognize that the optimization problem proposed in Lemma \ref{lemma:Lemma3} exactly corresponds to a maximization of the mutual information alpha variational lower bound presented above. The choice of the value function in \eqref{eq:discriminator_function} was designed to output the density ratio $R$, required for the direct mutual information estimation. However, it is immediate to notice (see Lemma \ref{lemma:Lemma5}) that the Fenchel conjugate achieves the maximum in the desired density ratio point $u^*=-\alpha/t = \alpha \cdot p_{XY} / ( p_{X}\cdot p_{Y})$, matching the results of Lemma 3. 

Interestingly, the variational lower bound derived using the conjugate function approach corresponds to the scalar case of $I_{TUBA}$ in \cite{Poole2019a}, obtained from the inequality $\log(x)\leq \frac{x}{a}+\log(a)-1$ and with the substitution $f(\mathbf{x},\mathbf{y}) = \log(D(\mathbf{x},\mathbf{y}))$. Moreover, the maximization of the special case $\mathcal{J}_{1}$ leads to $I_{NWJ}$ \cite{Nguyen2010}.

\section{Experimental Results}
\label{sec:results}

\begin{figure*}[h]
	\centering
	\includegraphics[scale=0.21]{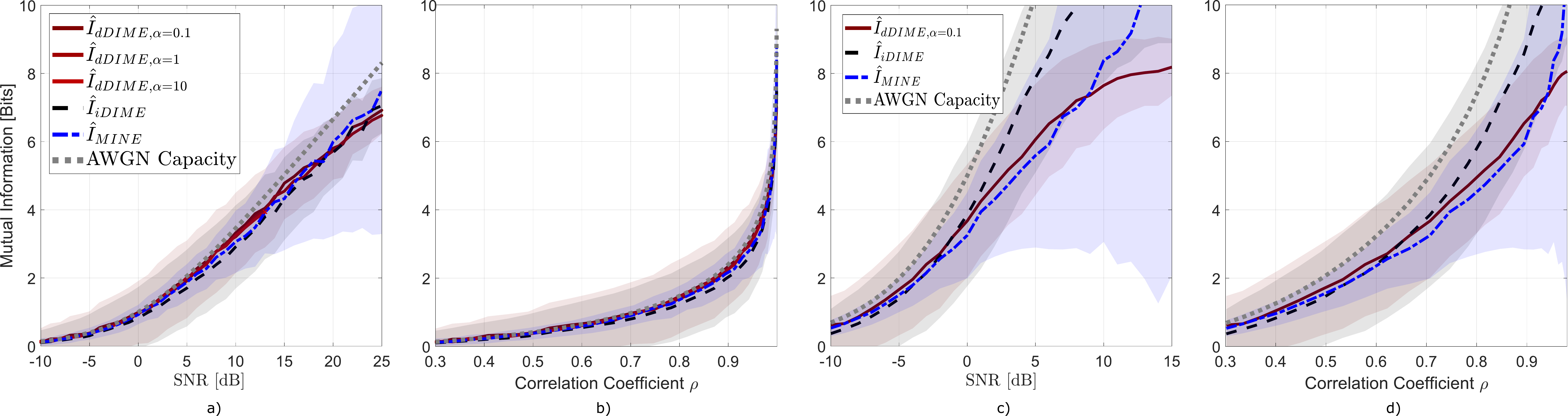}
	\caption{Performance of mutual information estimation approaches $\hat{I}$ between two multivariate Gaussians: a) SNR vs mutual information between a pair of $2$-d Gaussians; b) Correlation coefficient vs mutual information between a pair of $2$-d Gaussians; c) SNR vs mutual information between a pair of $10$-d Gaussians; d) Correlation coefficient vs mutual information between a pair of $10$-d Gaussians.}
	\label{fig:i_estimator}
\end{figure*}

\begin{figure*}[h]
	\centering
	\includegraphics[scale=0.21]{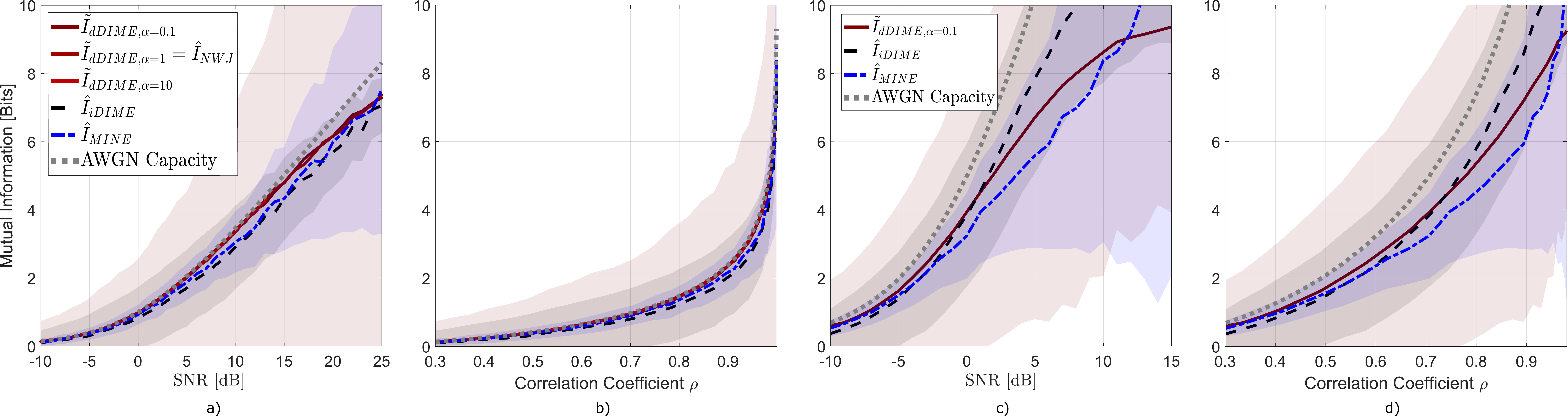}
	\caption{Performance of mutual information estimation approaches $\tilde{I}$ between two multivariate Gaussians: a) SNR vs mutual information between a pair of $2$-d Gaussians; b) Correlation coefficient vs mutual information between a pair of $2$-d Gaussians; c) SNR vs mutual information between a pair of $10$-d Gaussians; d) Correlation coefficient vs mutual information between a pair of $10$-d Gaussians.}
	\label{fig:j_estimator}
\end{figure*}

In the experimental results, we consider the transmission over an AWGN channel, for which we know the exact closed form expression of the mutual information when transmitting Gaussian samples, and for which we know the channel capacity under an average power constraint.

We firstly evaluate the accuracy of the proposed estimators. In particular,  we compare both i-DIME (denoted as $\hat{I}_{iDIME}$, see \eqref{eq:i-DMIE}) and d-DIME,  where the latter can assume two different forms. Either the estimator discussed in Lemma \ref{lemma:Lemma4} (see \eqref{eq:hat_idMIE}) or, exploiting the variational lower bound on the mutual information
\begin{equation}
\tilde{I}_{n,dDIME}(X;Y) := \frac{\mathcal{J}_{\alpha}(D^{(n)})}{\alpha}+1-\log(\alpha),
\end{equation}
where $D^{(n)}$ is the discriminator obtained after $n$ training iterations. 

Lastly, we study an analog and digital transmission scheme to show that the proposed cooperative networks are capable of learning the channel capacity without any prior assumption on the channel input distribution $p_X(\mathbf{x})$.

Both the generator $G$ and the discriminator $D$ are parameterized by neural networks. We train all discriminators models for $n = 5$k iterations. We use TensorFlow \cite{Tensorflow2016} and Keras \cite{Keras2015} to implement the proposed models. Details are discussed in the Appendix.

\subsection{Discriminative Estimators Performance}
\begin{figure*}[h]
	\centering
	\includegraphics[scale=0.175]{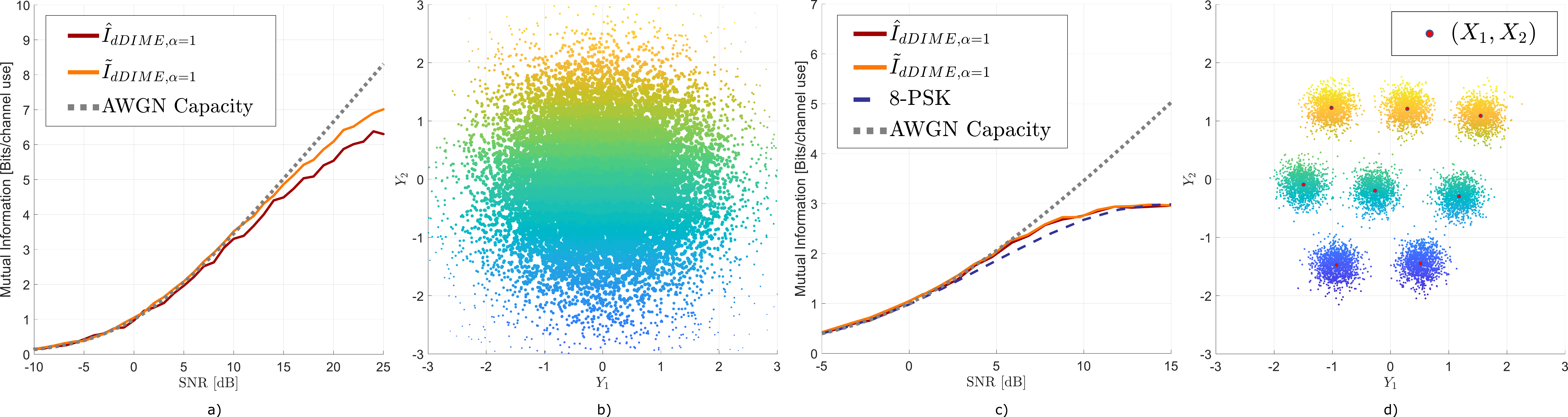}
	\caption{Performance of the CORTICAL framework in terms of maximal mutual information estimation using $\hat{I}_{dDIME}$ and $\tilde{I}_{dDIME}$ when : a) continuous $\mathbf{z}$ and $2$-dimensional $\mathbf{x}$; b) channel output samples $\mathbf{y}$ with continuous generator input $\mathbf{z}$ and $\text{SNR}=15$ dB c) discrete $\mathbf{z}$ ($M=8$) and $2$-dimensional $\mathbf{x}$; d) channel output samples $\mathbf{y}$ with discrete generator input $\mathbf{z}$ and $\text{SNR}=10$ dB.}
	\label{fig:capacity_plot}
\end{figure*} 

It is known that for a discrete memory-less channel with input-output relation as
$Y_i = X_i + N_i$, where the noise complex-valued samples $N_i \sim \mathcal{CN}(0,\sigma^2)$ are i.i.d. and independent of $X_i$, and with a power constraint on the input signal $\mathbb{E}[|X_i|^2] \leq P$, the mutual information $I(X_i;Y_i)$ is maximized when $X_i \sim \mathcal{CN}(0,P)$ and it corresponds to the channel capacity $C = \log_2(1+\text{SNR})$, where SNR denotes the signal-to-noise power ratio. 
When dealing with a pair of input-output $d$-dimensional Gaussians $X$ and $Y$, each with i.i.d. zero-mean components, the capacity can be written in terms of correlation coefficient \cite{Kraskov2004} $\rho = \mathbb{E}[X_i \cdot Y_i]/ \sqrt{\mathbb{E}[X_i^2]\cdot \mathbb{E}[Y_i^2]}$, for $i=1,\dots,d$, as 
\begin{equation}
I(X;Y) = -\frac{d}{2} \log_2(1-\rho^2).
\end{equation}
The correlation coefficient and the noise power $\sigma^2$ are related by $\rho^2 = 1/(1+\sigma^2)$ when $P=1$. The discriminative estimator block $D^{(n)}(\mathbf{x},\mathbf{y})$ receives both paired and unpaired samples of the $d$-dimensional Gaussians channel input-output samples. The discriminator output is used to estimate the mutual information with one of the discussed estimators. Despite $\hat{I}_{n,dDIME}$ and $\tilde{I}_{n,dDIME}$ being the same at the global maximum, they possess different properties. 

Fig.\ref{fig:i_estimator} illustrates the performance of the mutual information estimators $\hat{I}_{iDIME}$, $\hat{I}_{MINE}$ and $\hat{I}_{dDIME}$ for three different values of the parameter $\alpha$. For both tasks with $2$-d Gaussian samples and $10$-d Gaussian samples, $\hat{I}_{dDIME}$ has slightly better accuracy than $\hat{I}_{MINE}$ and has much lower variance. The shadow colored regions show the variance of each estimator which has been evaluated given the optimal discriminator and given batches of size $512$ taken from the empirical joint and marginal distributions, similarly to the analysis conducted in Sec.3 of \cite{Poole2019a}. The dashed curves correspond to the final estimated mutual information. The parameter $\alpha$ seems to have an impact on the accuracy only for low SNRs, since at high SNRs, high values of $\alpha$ result in numerical issues.
$\hat{I}_{iDIME}$ performs better when dealing with higher dimensional data. 
Fig.\ref{fig:j_estimator}, instead, illustrates the performance of the mutual information estimators $\hat{I}_{iDIME}$, $\hat{I}_{MINE}$ and $\tilde{I}_{dDIME}$ for three different values of the parameter $\alpha$. For both tasks with $2$-d Gaussian samples and $10$-d Gaussian samples, $\tilde{I}_{dDIME}$ has a better accuracy than $\hat{I}_{MINE}$ but also than $\hat{I}_{dDIME}$ at the expense of a higher variance, similar to what happens with the variance of $\hat{I}_{NWJ}$, having an exponential behaviour as described in \cite{Song2020}. 

From Fig.\ref{fig:i_estimator} and \ref{fig:j_estimator} it is possible to make the following observations: $\hat{I}_{dDIME}$ has the advantage of using only samples from the joint distribution $p_{XY}$ during the estimation, and it has a low variance at the expense of a less precise estimation compared to $\tilde{I}_{dDIME}$. Viceversa, $\tilde{I}_{dDIME}$ uses both the joint and the product of marginals $p_{X}p_{Y}$ during the estimation, it is much accurate but it has higher variance compared to other methods.

\subsection{Cooperative Networks Performance}

We argued that the optimal channel input distribution $p_X(\mathbf{x})$ is modeled by the generator in the CORTICAL framework. In the following, we consider two cases, a continuous and a discrete input distribution $p_Z(\mathbf{z})$. In the former case, the generator maps $\mathbf{z}$ into a continuous $2$-dimensional  channel input distribution. The latter case, instead, maps $M=8$ possible messages into bidimensional constellation points having probability mass function $p_X(\mathbf{x})$.

Fig.\ref{fig:capacity_plot} illustrates the estimated maximal mutual information using $\hat{I}_{dDIME}$ and $\tilde{I}_{dDIME}$ for the continuous and discrete case and shows the constellation diagram (channel output samples). In particular, Fig.\ref{fig:capacity_plot}a shows the estimated channel capacity when transmitting over an AWGN channel. The generator automatically learns to map the continuous signal $\mathbf{z}$ into a $2$-d Gaussian distribution whose noisy samples (for a fixed SNR) are depicted in Fig.\ref{fig:capacity_plot}b. Similarly, Fig.\ref{fig:capacity_plot}c shows the estimated mutual information when the input $\mathbf{z}$ can assume only $8$ possible values and it is compared with the established $8$-PSK modulation scheme. The transmitted and received samples (for a fixed SNR value) are placed in the constellation of Fig.\ref{fig:capacity_plot}d.


\section{Conclusions}
\label{sec:conclusions}
In this paper, we proposed CORTICAL, a cooperative framework to learn the channel capacity for any type of memory-less vector channel. CORTICAL consists of a generator network, acting as an encoder block that builds the optimal channel input distribution, and of a discriminator network acting as a mutual information estimator. We showed how the discriminator of a GAN can be used to indirectly estimate the mutual information between paired channel input-output samples. We also derived an alpha-parameterized cost function which allows direct estimation and which constitutes a lower bound on the true mutual information. We formalized such estimator in the context of variational representations of the mutual information. We lastly discussed the accuracy of the proposed estimators, denoted with $\hat{I}_{iDIME}$, $\hat{I}_{dDIME}$ and $\tilde{I}_{iDIME}$ and we demonstrated the advantages and disadvantages of the proposed approaches compared to $\hat{I}_{MINE}$. Given the simple formulation and the initial promising performance, we believe that the discriminative mutual information estimator combined with a generator block, as in the CORTICAL framework, is a good candidate to build optimal channel coding schemes. However, further investigations are needed for the architecture design and to analyze the behaviour of the estimators for high values of SNRs and for longer codes.

\bibliographystyle{unsrt}
\bibliography{biblio}

\newpage 
\appendix

\subsection{Omitted Proofs}
\label{sec:omitted_proofs}

 \setcounter{lemma}{0}
 \setcounter{theorem}{0}

\begin{lemma}
\label{lemma:Lemma1}
Let $X\sim p_X(\mathbf{x})$ and $Y\sim p_{Y}(\mathbf{y}|\mathbf{x})$ be the channel input and output, respectively, such that $Y = H(X)$. Let $H(\cdot)$ be the stochastic channel model and $\pi(\cdot)$ be a permutation function such that  $p_{\pi(Y)}(\pi(\mathbf{y})|\mathbf{x}) = p_{Y}(\mathbf{y})$. 
If $\mathcal{J}_{}(D)$ is a value function defined as 
\begin{align}
\mathcal{J}_{}(D) = \; & \mathbb{E}_{\mathbf{x} \sim p_{X}(\mathbf{x})}\biggl[\log \biggl(D\biggl(\mathbf{x},\pi(H(\mathbf{x}))\biggr)\biggr)\biggr] \nonumber \\
& + \mathbb{E}_{\mathbf{x} \sim p_{X}(\mathbf{x})}\biggl[\log \biggl(1-D\biggl(\mathbf{x},H(\mathbf{x})\biggr)\biggr)\biggr],
\end{align}
then
\begin{equation}
D^*(\mathbf{x},\mathbf{y}) = \frac{p_{X}(\mathbf{x})\cdot p_Y(\mathbf{y})}{p_{XY}(\mathbf{x},\mathbf{y})+p_{X}(\mathbf{x})\cdot p_Y(\mathbf{y})} = \arg \max_D \mathcal{J}(D),
\end{equation}
and
\begin{align}
I(X;Y) & = I_{iDIME}(X;Y) \nonumber \\
& =  \mathbb{E}_{(\mathbf{x},\mathbf{y}) \sim p_{XY}(\mathbf{x},\mathbf{y})}\biggl[\log \biggl(\frac{1-D^*(\mathbf{x},\mathbf{y})}{D^*(\mathbf{x},\mathbf{y})} \biggr)\biggr].
\end{align}
\end{lemma}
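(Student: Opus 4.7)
The plan is to mimic the standard optimal-discriminator argument from the original GAN proof, adapted to the fact that the two expectations here are both written with respect to $p_X$ but, via the permutation/channel construction, really play the role of expectations against $p_X p_Y$ and $p_{XY}$ respectively.

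First I would rewrite the two terms of $\mathcal{J}(D)$ in their natural joint form. The hypothesis $p_{\pi(Y)}(\pi(\mathbf{y})\mid\mathbf{x}) = p_Y(\mathbf{y})$ says that the pair $(\mathbf{x},\pi(H(\mathbf{x})))$ is distributed as $p_X(\mathbf{x})p_Y(\mathbf{y})$, because permuting the channel output across the batch destroys the coupling with $X$. Hence
\begin{align*}
\mathcal{J}(D) = &\; \mathbb{E}_{(\mathbf{x},\mathbf{y})\sim p_X p_Y}\bigl[\log D(\mathbf{x},\mathbf{y})\bigr] \\
&+ \mathbb{E}_{(\mathbf{x},\mathbf{y})\sim p_{XY}}\bigl[\log\bigl(1-D(\mathbf{x},\mathbf{y})\bigr)\bigr].
\end{align*}
Writing both expectations as Lebesgue integrals over $\mathcal{X}\times\mathcal{Y}$, $\mathcal{J}(D)$ becomes $\int\!\int\bigl\{p_X p_Y\log D + p_{XY}\log(1-D)\bigr\}\,\diff\mathbf{x}\,\diff\mathbf{y}$.

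Next I would exploit the fact that in the non-parametric limit the integrand can be optimized pointwise at each $(\mathbf{x},\mathbf{y})$. For fixed $(\mathbf{x},\mathbf{y})$, the map $u\mapsto a\log u + b\log(1-u)$ on $(0,1)$ with $a,b>0$ is strictly concave and attains its unique maximum at $u=a/(a+b)$, as one checks by setting $a/u - b/(1-u)=0$ and noting the second derivative $-a/u^2 - b/(1-u)^2<0$. Applying this with $a=p_X(\mathbf{x})p_Y(\mathbf{y})$ and $b=p_{XY}(\mathbf{x},\mathbf{y})$ yields exactly
\begin{equation*}
D^*(\mathbf{x},\mathbf{y}) = \frac{p_X(\mathbf{x})p_Y(\mathbf{y})}{p_{XY}(\mathbf{x},\mathbf{y}) + p_X(\mathbf{x})p_Y(\mathbf{y})},
\end{equation*}
which is the claimed maximizer.

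Finally, I would substitute $D^*$ into the estimator expression. A direct algebraic rearrangement gives
\begin{equation*}
\frac{1-D^*(\mathbf{x},\mathbf{y})}{D^*(\mathbf{x},\mathbf{y})} = \frac{p_{XY}(\mathbf{x},\mathbf{y})}{p_X(\mathbf{x})p_Y(\mathbf{y})},
\end{equation*}
and taking the expectation under $p_{XY}$ of the logarithm recovers precisely the definition of $I(X;Y)$ in \eqref{eq:mutual_information}, proving the final identity $I_{iDIME}(X;Y)=I(X;Y)$.

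I do not anticipate a serious obstacle here, since the argument is essentially Goodfellow et al.'s optimal-discriminator derivation transposed from $(p_{data},p_{gen})$ to $(p_X p_Y,p_{XY})$. The only point needing care is the justification that the permutation $\pi$ really decouples the pair so that $(\mathbf{x},\pi(H(\mathbf{x})))\sim p_X p_Y$; this is exactly what the assumption $p_{\pi(Y)}(\pi(\mathbf{y})\mid\mathbf{x})=p_Y(\mathbf{y})$ is put there to guarantee, and once that is in place the rest is routine pointwise calculus and a one-line algebraic identity.
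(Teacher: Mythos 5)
Your proposal is correct and follows essentially the same route as the paper's own proof: rewrite the two terms as expectations under $p_X p_Y$ and $p_{XY}$, optimize the integrand pointwise (checking concavity via the second derivative) to obtain $D^*$, and substitute to recover the density ratio and hence $I(X;Y)$. No gaps.
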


\begin{proof}
The proof follows some steps already introduced in \cite{Goodfellow2014}, with the main difference that the discriminator in the current set-up needs to distinguish between the joint $p_{XY}(\mathbf{x},\mathbf{y})$ and the product of marginals $p_{X}(\mathbf{x})\cdot p_Y(\mathbf{y})$ distributions.

From the hypothesis, the value function can be rewritten as
\begin{align}
\mathcal{J}_{}(D) = \; & \mathbb{E}_{(\mathbf{x},\mathbf{y}) \sim p_{X}(\mathbf{x})\cdot p_Y(\mathbf{y})}\biggl[\log \biggl(D(\mathbf{x},\mathbf{y})\biggr)\biggr] \nonumber \\
& + \mathbb{E}_{(\mathbf{x},\mathbf{y}) \sim p_{XY}(\mathbf{x},\mathbf{y})}\biggl[\log \biggl(1-D(\mathbf{x},\mathbf{y})\biggr)\biggr],
\label{eq:value_discriminator_p}
\end{align}
and using the Lebesgue integral to compute the expectation
\begin{align}
\mathcal{J}_{}(D) = & \int_{\mathbf{y}} \int_{\mathbf{x}} \biggl[p_{X}(\mathbf{x})\cdot p_Y(\mathbf{y}) \log \biggl(D(\mathbf{x},\mathbf{y})\biggr) \nonumber \\
& + p_{XY}(\mathbf{x},\mathbf{y}) \log \biggl(1-D(\mathbf{x},\mathbf{y})\biggr)\biggr] \diff \mathbf{x} \diff \mathbf{y}.
\end{align}
To maximize $\mathcal{J}_{}(D)$, a necessary and sufficient condition requires to take the derivative of the integrand with respect to $D$ and to set it to $0$, yielding the following equation in $D$
\begin{equation}
\frac{p_{X}(\mathbf{x})\cdot p_Y(\mathbf{y})}{D(\mathbf{x},\mathbf{y})} -\frac{p_{XY}(\mathbf{x},\mathbf{y})}{1-D(\mathbf{x},\mathbf{y})} =0,
\end{equation}
whose solution is the optimum discriminator
\begin{equation}
D^*(\mathbf{x},\mathbf{y}) = \frac{p_{X}(\mathbf{x})\cdot p_Y(\mathbf{y})}{p_{XY}(\mathbf{x},\mathbf{y})+p_{X}(\mathbf{x})\cdot p_Y(\mathbf{y})}.
\label{eq:optimal_discriminator_1}
\end{equation}
This is because $\mathcal{J}_{}(D^*)$ is a maximum since the second derivative $-\frac{p_{X}(\mathbf{x})\cdot p_Y(\mathbf{y})}{D^2(\mathbf{x},\mathbf{y})}-\frac{p_{XY}(\mathbf{x},\mathbf{y})}{(1-D(\mathbf{x},\mathbf{y}))^2}$ is a non-positive function.
Finally, at the equilibrium, 
\begin{equation}
\frac{p_{XY}(\mathbf{x},\mathbf{y})}{p_{X}(\mathbf{x})\cdot p_Y(\mathbf{y})} = \frac{1-D^*(\mathbf{x},\mathbf{y})}{D^*(\mathbf{x},\mathbf{y})},
\end{equation}
therefore the thesis follows, since the mutual information can be written as
\begin{align}
I(X;Y)  &= \mathbb{E}_{(\mathbf{x},\mathbf{y})\sim p_{XY}(\mathbf{x},\mathbf{y})}\biggl[\log\frac{p_{XY}(\mathbf{x},\mathbf{y})}{p_X(\mathbf{x})\cdot p_Y(\mathbf{y})}\biggr] \nonumber \\
&= I_{iDIME}(X;Y) \nonumber \\
& =  \mathbb{E}_{(\mathbf{x},\mathbf{y}) \sim p_{XY}(\mathbf{x},\mathbf{y})}\biggl[\log \biggl(\frac{1-D^*(\mathbf{x},\mathbf{y})}{D^*(\mathbf{x},\mathbf{y})} \biggr)\biggr].
\end{align}
\end{proof}

\begin{lemma}
\label{lemma:Lemma2}
Let the discriminator $D(\cdot)$ be with enough capacity, i.e., in the non parametric limit. Consider the problem
\begin{equation}
D^* =  \; \arg \max_D \mathcal{J}_{}(D)
\end{equation}
where
\begin{align}
\mathcal{J}_{}(D) = \; & \mathbb{E}_{(\mathbf{x},\mathbf{y}) \sim p_{X}(\mathbf{x})\cdot p_Y(\mathbf{y})}\biggl[\log \biggl(D(\mathbf{x},\mathbf{y})\biggr)\biggr] \nonumber \\
& + \mathbb{E}_{(\mathbf{x},\mathbf{y}) \sim p_{XY}(\mathbf{x},\mathbf{y})}\biggl[\log \biggl(1-D(\mathbf{x},\mathbf{y})\biggr)\biggr],
\end{align}
and the update rule based on the gradient descent method
\begin{equation}
D^{(n+1)} = D^{(n)} + \mu \nabla \mathcal{J}_{}(D^{(n)}).
\end{equation}
If the gradient descent method converges to the global optimum $D^*$, the mutual information estimator 
\begin{equation}
\hat{I}_{n,iDIME}(X;Y) = \mathbb{E}_{(\mathbf{x},\mathbf{y}) \sim p_{XY}(\mathbf{x},\mathbf{y})}\biggl[\log \biggl(\frac{1-D^{(n)}(\mathbf{x},\mathbf{y})}{D^{(n)}(\mathbf{x},\mathbf{y})} \biggr)\biggr]
\end{equation}
converges to the real value of the mutual information $I(X;Y)$.
\end{lemma}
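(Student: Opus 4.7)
The plan is to reduce the claim to a first-order perturbation analysis of the map $D \mapsto (1-D)/D$ around the global optimum $D^*$ supplied by Lemma~1. First I would introduce the displacement $\delta^{(n)} := D^* - D^{(n)}$, which, by the assumed convergence of the gradient descent update, tends to zero (pointwise). With the shorthand $R^{(n)} := (1-D^{(n)})/D^{(n)}$ and $R^* := (1-D^*)/D^* = p_{XY}/(p_X p_Y)$, elementary algebra yields
\begin{equation*}
R^{(n)} - R^* \;=\; -\,\frac{\delta^{(n)}}{D^*\bigl(D^* - \delta^{(n)}\bigr)},
\end{equation*}
so that, using the identity $R^* D^* = 1 - D^*$, one obtains the multiplicative form
\begin{equation*}
R^{(n)} \;=\; R^*\left(1 \,+\, \frac{\delta^{(n)}}{(1-D^*)\bigl(D^* - \delta^{(n)}\bigr)}\right).
\end{equation*}

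Next I would substitute this factorization into the definition of $\hat{I}_{n,iDIME}$ and split the logarithm as
\begin{equation*}
\hat{I}_{n,iDIME}(X;Y) \;=\; I(X;Y) \,+\, \mathbb{E}_{p_{XY}}\!\left[\log\!\left(1 + \frac{\delta^{(n)}}{(1-D^*)(D^*-\delta^{(n)})}\right)\right].
\end{equation*}
For $\delta^{(n)}$ small, the expansion $\log(1+x) = x + O(x^2)$ reduces the residual term to $\delta^{(n)} \cdot \mathbb{E}_{p_{XY}}[1/((1-D^*)D^*)]$ at leading order. Plugging in the closed form $D^* = p_X p_Y /(p_{XY} + p_X p_Y)$ and invoking the algebraic identity $(a+b)^2 = 4ab + (a-b)^2$, a direct integration shows that the expectation equals $4 + \chi^2(p_X p_Y, p_{XY})$. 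Consequently the bias admits the expansion
\begin{equation*}
\hat{I}_{n,iDIME}(X;Y) - I(X;Y) \;=\; \delta^{(n)} \bigl(\,4 + \chi^2(p_X p_Y, p_{XY})\,\bigr) \,+\, o(\delta^{(n)}),
\end{equation*}
which vanishes as $n \to \infty$, establishing the thesis.

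The main obstacle, which I expect to handle only at the formal level permitted by the non-parametric hypothesis, is the interchange of limit, expectation, and Taylor expansion. The proportionality factor $1/((1-D^*)D^*)$ is unbounded near the endpoints $D^* \in \{0,1\}$, and the residual $\chi^2$ term may itself be very large (or even divergent) whenever $p_{XY}$ is strongly concentrated relative to $p_X p_Y$, i.e., whenever $I(X;Y)$ is large. A fully rigorous argument would need either a dominated convergence hypothesis keeping the family $(D^{(n)})_n$ uniformly bounded away from $0$ and $1$, or a direct $L^1(p_{XY})$ control $\|\log R^{(n)} - \log R^*\|_{L^1(p_{XY})} \to 0$; under the non-parametric limit invoked in the statement, the argument proceeds through pointwise convergence, with this integrability caveat understood implicitly.
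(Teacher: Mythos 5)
Your proposal follows essentially the same route as the paper's own proof: the displacement $\delta^{(n)}=D^*-D^{(n)}$, the identity $R^*-R^{(n)}=-\delta^{(n)}/\bigl(D^*(D^*-\delta^{(n)})\bigr)$, the factorization through $R^*D^*=1-D^*$, the first-order expansion of the logarithm, and the evaluation of the leading coefficient as $4+\chi^2(p_Xp_Y,p_{XY})$ all match the paper step for step (note only that your first displayed identity has the sign of $R^{(n)}-R^*$ flipped relative to the correct computation, though your subsequent multiplicative form is consistent with the paper's). Your closing caveat about the unboundedness of $1/\bigl((1-D^*)D^*\bigr)$ and the need for dominated convergence or $L^1$ control is a fair observation that the paper itself leaves implicit.
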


\begin{proof}
The solution to \eqref{eq:Lemma2_problem} is given by \eqref{eq:optimal_discriminator_1} of Lemma \ref{lemma:Lemma1}. Let $\delta^{(n)}=D^*-D^{(n)}$ be the displacement between the optimum discriminator $D^*$ and the obtained one $D^{(n)}$ at the iteration $n$, then
\begin{align}
\hat{I}_{n,iDIME}(X;Y) & = \mathbb{E}_{(\mathbf{x},\mathbf{y}) \sim p_{XY}(\mathbf{x},\mathbf{y})}\biggl[\log \biggl(\frac{1-D^{(n)}(\mathbf{x},\mathbf{y})}{D^{(n)}(\mathbf{x},\mathbf{y})} \biggr)\biggr] \nonumber \\
& = \mathbb{E}_{(\mathbf{x},\mathbf{y}) \sim p_{XY}(\mathbf{x},\mathbf{y})}\biggl[\log \biggl(R^{(n)}(\mathbf{x},\mathbf{y}) \biggr)\biggr],
\end{align}
where $R^{(n)}(\mathbf{x},\mathbf{y})$ represents the estimated density ratio at the $n$-th iteration and it is related to the optimum ratio $R^*(\mathbf{x},\mathbf{y})$ as follows
\begin{align}
R^* - R^{(n)} & = \frac{1-D^*}{D^*} - \frac{1-D^{(n)}}{D^{(n)}} \nonumber \\  
& = \frac{1-D^*}{D^*} - \frac{1-(D^*-\delta^{(n)})}{D^*-\delta^{(n)}}  \nonumber \\
& = -\frac{\delta^{(n)}}{D^* (D^*-\delta^{(n)})}.
\end{align}
Therefore,
\begin{align}
& \hat{I}_{n,iDIME}(X;Y) = \mathbb{E}_{(\mathbf{x},\mathbf{y}) \sim p_{XY}(\mathbf{x},\mathbf{y})}\biggl[\log \bigl(R^{(n)}\bigr)\biggr] \nonumber \\
& =  \mathbb{E}_{(\mathbf{x},\mathbf{y}) \sim p_{XY}(\mathbf{x},\mathbf{y})}\biggl[\log \biggl(R^*+\frac{\delta^{(n)}}{D^* (D^*-\delta^{(n)})} \biggr)\biggr] \nonumber \\ 
& = \mathbb{E}_{(\mathbf{x},\mathbf{y}) \sim p_{XY}(\mathbf{x},\mathbf{y})}\biggl[\log \biggl(\bigl(R^*\bigr)\biggl(1+\frac{\delta^{(n)}}{R^*  D^* (D^*-\delta^{(n)})} \biggr)\biggr)\biggr] \nonumber \\ 
& = I(X;Y) + \mathbb{E}_{(\mathbf{x},\mathbf{y}) \sim p_{XY}(\mathbf{x},\mathbf{y})}\biggl[\log \biggl(1+\frac{\delta^{(n)}}{(1-D^*)(D^*-\delta^{(n)})} \biggr)\biggr].
\end{align}

If the gradient descent method converges towards the optimum solution $D^*$, $\delta^{(n)} \rightarrow 0$ and 
\begin{align}
& \hat{I}_{n,iDIME}(X;Y) \simeq  \nonumber \\
& \simeq \; I(X;Y) + \delta^{(n)} \cdot \mathbb{E}_{(\mathbf{x},\mathbf{y}) \sim p_{XY}(\mathbf{x},\mathbf{y})}\biggl[\frac{1}{(1-D^*)(D^*-\delta^{(n)})} \biggr] \nonumber \\ 
& \simeq \; I(X;Y) + \delta^{(n)} \int_{\mathbf{y}} \int_{\mathbf{x}}{\frac{(p_{XY}(\mathbf{x},\mathbf{y})+p_{X}(\mathbf{x})p_Y(\mathbf{y}))^2}{p_{XY}(\mathbf{x},\mathbf{y})} \diff \mathbf{x} \diff \mathbf{y}} \nonumber \\ 
& = \; I(X;Y) + 4\delta^{(n)} + \nonumber \\ 
& \;  + \delta^{(n)} \int_{\mathbf{y}} \int_{\mathbf{x}}{\frac{(p_{XY}(\mathbf{x},\mathbf{y})-p_{X}(\mathbf{x})p_Y(\mathbf{y}))^2}{p_{XY}(\mathbf{x},\mathbf{y})} \diff \mathbf{x} \diff \mathbf{y}} \nonumber \\ 
& = \; I(X;Y) + 4\delta^{(n)} + \delta^{(n)} \cdot \chi^2 \bigl(p_{X}p_{Y},p_{XY}\bigr),
\end{align}
where $\chi^2(Q,P)$ is the Chi-square divergence between the distributions $P$ and $Q$.
In the asymptotic limit ($n\rightarrow +\infty$), it holds that 
\begin{equation}
|I(X;Y)-\hat{I}_{n,iDIME}(X;Y)|\rightarrow 0.
\end{equation}
\end{proof}

\begin{lemma}
Let $X\sim p_X(\mathbf{x})$ and $Y\sim p_{Y}(\mathbf{y}|\mathbf{x})$ be the channel input and output, respectively, such that $Y = H(X)$. Let $H(\cdot)$ be the stochastic channel model and $\pi(\cdot)$ be a permutation function such that  $p_{\pi(Y)}(\pi(\mathbf{y})|\mathbf{x}) = p_{Y}(\mathbf{y})$. 
If $\mathcal{J}_{\alpha}(D)$, $\alpha>0$, is a value function defined as 
\begin{align}
\mathcal{J}_{\alpha}(D) = \; & \alpha \cdot \mathbb{E}_{\mathbf{x} \sim p_{X}(\mathbf{x})}\biggl[\log \biggl(D\biggl(\mathbf{x},H(\mathbf{x})\biggr)\biggr)\biggr] \nonumber \\
& +\mathbb{E}_{\mathbf{x} \sim p_{X}(\mathbf{x})}\biggl[-D\biggl(\mathbf{x},\pi(H(\mathbf{x}))\biggr)\biggr],
\end{align}
then
\begin{equation}
D^*(\mathbf{x},\mathbf{y}) = \alpha \cdot \frac{p_{XY}(\mathbf{x},\mathbf{y})}{p_{X}(\mathbf{x})\cdot p_Y(\mathbf{y})} = \arg \max_D \mathcal{J}_{\alpha}(D).
\end{equation}
and
\begin{align}
I(X;Y) & = I_{dDIME}(X;Y) \nonumber \\
& = \mathbb{E}_{(\mathbf{x},\mathbf{y}) \sim p_{XY}(\mathbf{x},\mathbf{y})}\biggl[\log \biggl(\frac{D^*(\mathbf{x},\mathbf{y})}{\alpha} \biggr)\biggr]
\end{align}
or alternatively,
\begin{equation}
I_{dDIME}(X;Y) = \frac{\mathcal{J}_{\alpha}(D^*)}{\alpha}+1-\log(\alpha).
\end{equation}
\end{lemma}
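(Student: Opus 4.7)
The plan is to follow the same pointwise variational strategy used in Lemma \ref{lemma:Lemma1}, exploiting the permutation-invariance hypothesis $p_{\pi(Y)}(\pi(\mathbf{y})|\mathbf{x}) = p_Y(\mathbf{y})$ to rewrite the single-expectation form of $\mathcal{J}_\alpha(D)$ as a sum of two expectations taken respectively under $p_{XY}$ and under $p_X p_Y$. Concretely, because $(\mathbf{x}, H(\mathbf{x}))$ is distributed as $p_{XY}$, the first term becomes $\alpha\,\mathbb{E}_{(\mathbf{x},\mathbf{y})\sim p_{XY}}[\log D(\mathbf{x},\mathbf{y})]$; and because the permutation $\pi$ decouples $\mathbf{x}$ from the output so that $(\mathbf{x}, \pi(H(\mathbf{x})))$ is distributed as $p_X p_Y$, the second term becomes $-\mathbb{E}_{(\mathbf{x},\mathbf{y})\sim p_X p_Y}[D(\mathbf{x},\mathbf{y})]$.

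Having rewritten $\mathcal{J}_\alpha$ in two-measure form, I would express it as the single double integral
\begin{equation}
\mathcal{J}_\alpha(D) = \int\!\!\int \Bigl[\alpha\, p_{XY}(\mathbf{x},\mathbf{y})\log D(\mathbf{x},\mathbf{y}) - p_X(\mathbf{x})p_Y(\mathbf{y})\, D(\mathbf{x},\mathbf{y})\Bigr] \diff \mathbf{x}\diff \mathbf{y},
\end{equation}
and maximize the integrand pointwise in $D(\mathbf{x},\mathbf{y}) \in (0,\infty)$. Setting the derivative $\alpha p_{XY}/D - p_X p_Y$ to zero yields $D^*(\mathbf{x},\mathbf{y}) = \alpha\, p_{XY}(\mathbf{x},\mathbf{y})/\bigl(p_X(\mathbf{x})p_Y(\mathbf{y})\bigr)$, and the second derivative $-\alpha p_{XY}/D^2 < 0$ confirms this stationary point is indeed a global maximum over positive $D$.

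For the two claimed expressions of $I_{dDIME}$, the first follows immediately: substituting $D^*/\alpha = p_{XY}/(p_X p_Y)$ into $\mathbb{E}_{p_{XY}}[\log(D^*/\alpha)]$ recovers the definition of $I(X;Y)$. For the alternative form $I_{dDIME}(X;Y) = \mathcal{J}_\alpha(D^*)/\alpha + 1 - \log(\alpha)$, I would substitute $D^*$ back into $\mathcal{J}_\alpha$ and observe that the second term collapses to $-\alpha\,\mathbb{E}_{p_X p_Y}[p_{XY}/(p_X p_Y)] = -\alpha$, while the first term equals $\alpha\log\alpha + \alpha\, I(X;Y)$. Dividing by $\alpha$ and adding $1 - \log\alpha$ recovers $I(X;Y)$, as required.

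The only mild obstacle is bookkeeping of the constant $\alpha$ through the substitution step; the variational argument itself is entirely standard. What is conceptually informative, in contrast to Lemma \ref{lemma:Lemma1}, is that replacing the $\log(1-D)$ penalty with a linear $-D$ penalty acts as a Lagrangian-type term that forces $D^*$ to be directly proportional to the density ratio $p_{XY}/(p_X p_Y)$ rather than a Bernoulli-type sigmoidal transform of it, with $\alpha > 0$ playing the role of a free modulation scale that the authors will exploit downstream to mitigate saturation of the estimator.
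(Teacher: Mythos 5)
Your proposal is correct and follows essentially the same route as the paper's own proof: rewriting $\mathcal{J}_\alpha$ under the joint and product-of-marginals measures, maximizing the integrand pointwise with the second-derivative check, and substituting $D^*$ back to obtain both expressions for $I_{dDIME}$, including the $-\alpha$ collapse of the linear term. No gaps.
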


\begin{proof}
From the hypothesis, the value function can be rewritten as
\begin{align}
\mathcal{J}_{\alpha}(D) = \; & \alpha \cdot \mathbb{E}_{(\mathbf{x},\mathbf{y}) \sim p_{XY}(\mathbf{x},\mathbf{y})}\biggl[\log \biggl(D(\mathbf{x},\mathbf{y})\biggr)\biggr] \nonumber \\
& +\mathbb{E}_{(\mathbf{x},\mathbf{y}) \sim p_{X}(\mathbf{x})\cdot p_Y(\mathbf{y})}\biggl[-D(\mathbf{x},\mathbf{y})\biggr],
\label{eq:value_discriminator_p}
\end{align}
and using the Lebesgue integral to compute the expectation
\begin{align}
\mathcal{J}_{\alpha}(D) = \; & \alpha \cdot \int_{\mathbf{y}} \int_{\mathbf{x}}\biggl[p_{XY}(\mathbf{x},\mathbf{y}) \log \biggl(D(\mathbf{x},\mathbf{y})\biggr) \nonumber \\ 
& + p_{X}(\mathbf{x})\cdot p_Y(\mathbf{y}) \biggl(-D(\mathbf{x},\mathbf{y})\biggr)\biggr] \diff \mathbf{x} \diff \mathbf{y}.
\end{align}
Taking the derivative of integrand with respect to $D$ and setting it to $0$ yields the following equation in $D$
\begin{equation}
\alpha \cdot \frac{p_{XY}(\mathbf{x},\mathbf{y})}{D(\mathbf{x},\mathbf{y})} - p_{X}(\mathbf{x})\cdot p_Y(\mathbf{y})=0,
\end{equation}
whose solution is the optimum discriminator
\begin{equation}
D^*(\mathbf{x},\mathbf{y}) = \alpha \cdot \frac{p_{XY}(\mathbf{x},\mathbf{y})}{p_{X}(\mathbf{x})\cdot p_Y(\mathbf{y})}.
\end{equation}
In particular, $\mathcal{J}_{\alpha}(D^*)$ is a maximum since the second derivative $-\alpha \cdot \frac{p_{XY}(\mathbf{x},\mathbf{y})}{D^2(\mathbf{x},\mathbf{y})}$ is a non-positive function.
Finally, at the equilibrium, 
\begin{equation}
\frac{p_{XY}(\mathbf{x},\mathbf{y})}{p_{X}(\mathbf{x})\cdot p_Y(\mathbf{y})} = \frac{D^*(\mathbf{x},\mathbf{y})}{\alpha},
\end{equation}
therefore
\begin{align}
I(X;Y) = & \; I_{dDIME}(X;Y) \nonumber \\
 = & \;  \mathbb{E}_{(\mathbf{x},\mathbf{y}) \sim p_{XY}(\mathbf{x},\mathbf{y})}\biggl[\log \biggl(\frac{D^*(\mathbf{x},\mathbf{y})}{\alpha} \biggr)\biggr].
\end{align}
Furthermore, substituting $D^*(\mathbf{x},\mathbf{y})$ in \eqref{eq:value_discriminator_p} gives
\begin{align}
\mathcal{J}_{\alpha}(D^*) =  \; \alpha & \cdot \mathbb{E}_{(\mathbf{x},\mathbf{y}) \sim p_{XY}(\mathbf{x},\mathbf{y})}\biggl[\log \biggl(\alpha \cdot \frac{p_{XY}(\mathbf{x},\mathbf{y})}{p_{X}(\mathbf{x})\cdot p_Y(\mathbf{y})} \biggr)\biggr] \nonumber \\
& +\mathbb{E}_{(\mathbf{x},\mathbf{y}) \sim p_{X}(\mathbf{x})\cdot p_Y(\mathbf{y})}\biggl[-\alpha \cdot \frac{p_{XY}(\mathbf{x},\mathbf{y})}{p_{X}(\mathbf{x})\cdot p_Y(\mathbf{y})}\biggr],
\end{align}
where it is easy to recognize that the second term of the right hand side is equal to $-\alpha$, resulting in
\begin{align}
\mathcal{J}_{\alpha}(D^*) =  \; & \alpha \log(\alpha) - \alpha + \nonumber \\
& +\alpha \cdot \mathbb{E}_{(\mathbf{x},\mathbf{y}) \sim p_{XY}(\mathbf{x},\mathbf{y})}\biggl[\log \biggl(\frac{p_{XY}(\mathbf{x},\mathbf{y})}{p_{X}(\mathbf{x})\cdot p_Y(\mathbf{y})} \biggr)\biggr],
\end{align}
and the thesis follows
\begin{equation}
\mathcal{J}_{\alpha}(D^*) =  \; \alpha \log(\alpha) - \alpha +\alpha \cdot I_{dDIME}(X;Y).
\end{equation}
\end{proof}
\qedhere  

\begin{lemma}
Let the discriminator $D(\cdot)$ be with enough capacity, i.e., in the non parametric limit. Consider the problem
\begin{equation}
D^* =  \; \arg \max_D \mathcal{J}_{\alpha}(D)
\end{equation}
where
\begin{align}
\mathcal{J}_{\alpha}(D) = \; & \alpha \cdot \mathbb{E}_{(\mathbf{x},\mathbf{y}) \sim p_{XY}(\mathbf{x},\mathbf{y})}\biggl[\log \biggl(D(\mathbf{x},\mathbf{y})\biggr)\biggr] \nonumber \\
& +\mathbb{E}_{(\mathbf{x},\mathbf{y}) \sim p_{X}(\mathbf{x})\cdot p_Y(\mathbf{y})}\biggl[-D(\mathbf{x},\mathbf{y})\biggr],
\end{align}
and the update rule based on the gradient descent method
\begin{equation}
D^{(n+1)} = D^{(n)} + \mu \nabla \mathcal{J}_{\alpha}(D^{(n)}).
\end{equation}
If the gradient descent method converges to the global optimum $D^*$, the mutual information estimator 
\begin{equation}
\hat{I}_{n,dDIME}(X;Y) = \mathbb{E}_{(\mathbf{x},\mathbf{y}) \sim p_{XY}(\mathbf{x},\mathbf{y})}\biggl[\log \biggl(\frac{D^{(n)}(\mathbf{x},\mathbf{y})}{\alpha} \biggr)\biggr]
\end{equation}
converges to the real value of the mutual information $I(X;Y)$.
\end{lemma}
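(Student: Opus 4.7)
The plan is to mimic the strategy already used for Lemma 2 (i-DIME convergence), exploiting the fact that, by Lemma 3, the unique optimizer of $\mathcal{J}_{\alpha}(D)$ is the density-ratio-shaped discriminator $D^*(\mathbf{x},\mathbf{y}) = \alpha \cdot p_{XY}(\mathbf{x},\mathbf{y}) / (p_X(\mathbf{x}) p_Y(\mathbf{y}))$, and that with this $D^*$ one has $I(X;Y) = \mathbb{E}_{p_{XY}}[\log(D^*/\alpha)]$. The idea is to measure how the estimator responds to small perturbations around $D^*$ and then invoke the hypothesis that gradient descent sends those perturbations to zero.

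First I would define the iterate error $\delta^{(n)}(\mathbf{x},\mathbf{y}) := D^*(\mathbf{x},\mathbf{y}) - D^{(n)}(\mathbf{x},\mathbf{y})$ and substitute $D^{(n)} = D^* - \delta^{(n)}$ into the estimator:
\begin{align}
\hat{I}_{n,dDIME}(X;Y) &= \mathbb{E}_{p_{XY}}\!\left[\log\!\left(\frac{D^*-\delta^{(n)}}{\alpha}\right)\right] \nonumber \\
&= \mathbb{E}_{p_{XY}}\!\left[\log\!\left(\frac{D^*}{\alpha}\right)\right] + \mathbb{E}_{p_{XY}}\!\left[\log\!\left(1-\frac{\delta^{(n)}}{D^*}\right)\right].
\end{align}
The first term is exactly $I(X;Y)$ by Lemma 3, so the whole convergence question reduces to showing that the residual expectation vanishes as $n\to\infty$.

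Next I would linearize the residual using $\log(1-u)\simeq -u$ for small $u$, which is justified once $\delta^{(n)}/D^*$ is pointwise small (a consequence of $D^{(n)}\to D^*$ in the non-parametric limit, so that $\delta^{(n)}\to 0$). This gives
\begin{equation}
\hat{I}_{n,dDIME}(X;Y) \simeq I(X;Y) - \delta^{(n)}\cdot \mathbb{E}_{p_{XY}}\!\left[\frac{1}{D^*(\mathbf{x},\mathbf{y})}\right].
\end{equation}
The key closed-form simplification is that substituting $D^* = \alpha\, p_{XY}/(p_X p_Y)$ yields $\mathbb{E}_{p_{XY}}[1/D^*] = \int p_X p_Y/\alpha \, d\mathbf{x}\,d\mathbf{y} = 1/\alpha$, so the bias reduces to $\delta^{(n)}/\alpha$, and taking $n\to\infty$ gives $|I(X;Y)-\hat{I}_{n,dDIME}(X;Y)|\to 0$.

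The main technical obstacle is making the first-order expansion rigorous uniformly over $(\mathbf{x},\mathbf{y})$: one needs $\delta^{(n)}/D^*$ to be uniformly small enough that the higher-order terms in the logarithm integrate to zero against $p_{XY}$, and one needs $D^*$ to be bounded away from zero on the support of $p_{XY}$. In practice this is handled, as in the companion Lemma 2, by working in the non-parametric limit and assuming the convergence $\delta^{(n)}\to 0$ is strong enough (e.g. uniform or in a weighted $L^1$ sense) that dominated convergence applies to the residual expectation; under that regularity the argument above closes.
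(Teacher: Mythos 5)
Your proposal follows essentially the same route as the paper's own proof: define the displacement $\delta^{(n)}=D^*-D^{(n)}$, factor the estimator as $I(X;Y)+\mathbb{E}_{p_{XY}}[\log(1-\delta^{(n)}/D^*)]$, linearize the logarithm, and use $\mathbb{E}_{p_{XY}}[1/D^*]=1/\alpha$ to show the bias is $\delta^{(n)}/\alpha\to 0$. Your closing remarks on uniformity and dominated convergence are, if anything, more careful than the paper's treatment, which silently pulls $\delta^{(n)}$ out of the expectation as if it were a scalar.
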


\begin{proof}
The solution to \eqref{eq:Lemma3_problem} is given by \eqref{eq:optimal_discriminator_2} of Lemma \ref{lemma:Lemma2}. Let $\delta^{(n)}=D^*-D^{(n)}$ be the displacement between the optimum discriminator $D^*$ and the obtained one $D^{(n)}$ at the iteration $n$, then
\begin{align}
& \hat{I}_{n,dDIME}(X;Y) = \; \mathbb{E}_{(\mathbf{x},\mathbf{y}) \sim p_{XY}(\mathbf{x},\mathbf{y})}\biggl[\log \biggl(\frac{D^{(n)}(\mathbf{x},\mathbf{y})}{\alpha} \biggr)\biggr] \nonumber \\ 
& = \; \mathbb{E}_{(\mathbf{x},\mathbf{y}) \sim p_{XY}(\mathbf{x},\mathbf{y})}\biggl[\log \biggl(\frac{D^*(\mathbf{x},\mathbf{y})-\delta^{(n)}}{\alpha} \biggr)\biggr] \nonumber \\ 
& = \; \mathbb{E}_{(\mathbf{x},\mathbf{y}) \sim p_{XY}(\mathbf{x},\mathbf{y})}\biggl[\log \biggl(\biggl(\frac{D^*(\mathbf{x},\mathbf{y})}{\alpha}\biggl)\biggl(1-\frac{\delta^{(n)}}{D^*(\mathbf{x},\mathbf{y})} \biggr)\biggr) \biggr] \nonumber \\ 
& = \; I(X;Y) + \mathbb{E}_{(\mathbf{x},\mathbf{y}) \sim p_{XY}(\mathbf{x},\mathbf{y})}\biggl[\log \biggl(1-\frac{\delta^{(n)}}{D^*(\mathbf{x},\mathbf{y})} \biggr) \biggr].
\end{align}
If the gradient descent method converges towards the optimum solution $D^*$, $\delta^{(n)} \rightarrow 0$ and therefore 
\begin{align}
& \hat{I}_{n,dDIME}(X;Y) \simeq \nonumber \\
& \simeq \; I(X;Y) - \delta^{(n)} \cdot \mathbb{E}_{(\mathbf{x},\mathbf{y}) \sim p_{XY}(\mathbf{x},\mathbf{y})}\biggl[\frac{1}{D^*(\mathbf{x},\mathbf{y})} \biggr] \nonumber \\ 
& = \; I(X;Y) - \delta^{(n)} \cdot \mathbb{E}_{(\mathbf{x},\mathbf{y}) \sim p_{XY}(\mathbf{x},\mathbf{y})}\biggl[\frac{p_{X}(\mathbf{x})\cdot p_Y(\mathbf{y})}{\alpha \cdot p_{XY}(\mathbf{x},\mathbf{y})} \biggr] \nonumber \\
& = \; I(X;Y) - \frac{\delta^{(n)}}{\alpha},
\end{align}
which implies in the asymptotic limit ($n\rightarrow +\infty$) that 
\begin{equation}
|I(X;Y)-\hat{I}_{n,dDIME}(X;Y)|\rightarrow 0.
\end{equation}
\end{proof}
\qedhere  

\begin{theorem}
Let $X\sim p_X(\mathbf{x})$ and $Y\sim p_{Y}(\mathbf{y}|\mathbf{x})$ be the channel input and output, respectively, such that $X = G(Z)$ with $Z\sim p_Z(\mathbf{z})$. Let $Y = H(X)$ with $H(\cdot)$ being the stochastic channel model and let $\pi(\cdot)$ be a permutation function such that $p_{\pi(Y)}(\pi(\mathbf{y})|\mathbf{x}) = p_{Y}(\mathbf{y})$. 
If $\mathcal{J}_{\alpha}(G,D)$, $\alpha>0$, is a value function defined as 
\begin{align}
\mathcal{J}_{\alpha}(G,D) =  \; & \alpha \cdot \mathbb{E}_{\mathbf{z} \sim p_{Z}(\mathbf{z})}\biggl[\log \biggl(D\biggl(G(\mathbf{z}),H(G(\mathbf{z}))\biggr)\biggr)\biggr] \nonumber \\
&  +\mathbb{E}_{\mathbf{z} \sim p_{Z}(\mathbf{z})}\biggl[-D\biggl(G(\mathbf{z}),\pi(H(G(\mathbf{z})))\biggr)\biggr],
\end{align}
then the channel capacity $C$ is the solution of
\begin{equation}
C = \max_{G} \max_{D} \frac{\mathcal{J}_{\alpha}(G,D)}{\alpha} + 1- \log(\alpha)
\end{equation}
\end{theorem}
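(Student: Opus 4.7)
The plan is to reduce the outer problem to a two-level maximization where the inner maximum (over $D$) is already handled by Lemma~\ref{lemma:Lemma3}, and the outer maximum (over $G$) is, by construction, a maximization of $I(X;Y)$ over the class of input distributions realizable by $G$. First I would rewrite the value function in distributional form: since $X = G(Z)$ with $Z\sim p_Z$ induces some input law $p_X$, and since $Y = H(X)$ induces the joint law $p_{XY}$, the first expectation becomes an expectation over $(\mathbf{x},\mathbf{y})\sim p_{XY}$; by the hypothesis on $\pi$, the pair $(\mathbf{x},\pi(H(\mathbf{x})))$ has law $p_X(\mathbf{x})p_Y(\mathbf{y})$, so the second expectation becomes an expectation over the product of marginals. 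Thus
\begin{align}
\mathcal{J}_{\alpha}(G,D) = \; & \alpha \cdot \mathbb{E}_{(\mathbf{x},\mathbf{y})\sim p_{XY}}\bigl[\log D(\mathbf{x},\mathbf{y})\bigr] \nonumber \\
& + \mathbb{E}_{(\mathbf{x},\mathbf{y})\sim p_{X}\,p_{Y}}\bigl[-D(\mathbf{x},\mathbf{y})\bigr].
\end{align}

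Next, I would fix an arbitrary $G$ (equivalently, an arbitrary $p_X$) and apply Lemma~\ref{lemma:Lemma3} to the inner maximization. The lemma yields $D^*(\mathbf{x},\mathbf{y}) = \alpha\, p_{XY}(\mathbf{x},\mathbf{y})/(p_X(\mathbf{x})p_Y(\mathbf{y}))$ as the unique maximizer, and directly gives the identity
\begin{equation}
\frac{\mathcal{J}_{\alpha}(G,D^*)}{\alpha}+1-\log(\alpha) \;=\; I(X;Y),
\end{equation}
where $I(X;Y)$ is computed with respect to the joint induced by $G$ and the channel $H$. Consequently
\begin{equation}
\max_{D}\left\{\frac{\mathcal{J}_{\alpha}(G,D)}{\alpha}+1-\log(\alpha)\right\} = I(X;Y).
\end{equation}

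Finally I would take the outer $\max_G$. Under the standard nonparametric assumption that $G$ has sufficient capacity to transform the latent $Z\sim p_Z$ into any admissible input law $p_X(\mathbf{x})$ (so that the range of $G$ covers the feasible set in the capacity problem \eqref{eq:capacity}), the outer maximization over $G$ coincides with the maximization over $p_X$, giving
\begin{equation}
\max_{G}\max_{D}\left\{\frac{\mathcal{J}_{\alpha}(G,D)}{\alpha}+1-\log(\alpha)\right\} = \max_{p_X(\mathbf{x})} I(X;Y) = C.
\end{equation}

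The main obstacle is not the algebra, which is essentially a cosmetic rearrangement on top of Lemma~\ref{lemma:Lemma3}, but rather the justification that ranging $G$ over a sufficiently expressive parametric family is equivalent to ranging over the set of feasible channel input distributions $p_X$. In the idealized nonparametric limit this is immediate (any target $p_X$ can be realized as a pushforward of $p_Z$ through a suitable measurable $G$), but any constraint on the input (for example an average-power constraint) must be enforced either on $G$'s output or through a penalty in $\mathcal{J}_\alpha$, otherwise the supremum over $G$ is ill-defined. I would state this feasibility/constraint assumption explicitly at the start of the proof so that the swap between $\max_G$ and $\max_{p_X}$ is rigorous, and then the result follows by direct substitution.
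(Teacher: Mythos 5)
Your proposal is correct and follows essentially the same route as the paper's proof: rewrite $\mathcal{J}_{\alpha}(G,D)$ in terms of $p_{XY}$ and $p_X p_Y$, invoke Lemma~\ref{lemma:Lemma3} to resolve the inner maximization over $D$ and obtain $\mathcal{J}_{\alpha}(G,D^*)=\alpha\log(\alpha)-\alpha+\alpha\, I(X;Y)$, then identify the outer maximization over $G$ with the maximization over $p_X$. Your explicit remark that the equivalence $\max_G \equiv \max_{p_X}$ requires $G$ to realize all feasible input laws (and that input constraints such as average power must be enforced on $G$'s output) is a worthwhile clarification that the paper only states implicitly.
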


\begin{proof}
From the hypothesis, the value function can be rewritten as
\begin{align}
\mathcal{J}_{\alpha}(G,D) = \; & \alpha \cdot \mathbb{E}_{(\mathbf{x},\mathbf{y}) \sim p_{XY}(\mathbf{x},\mathbf{y})}\biggl[\log \biggl(D(\mathbf{x},\mathbf{y})\biggr)\biggr] \nonumber \\
& +\mathbb{E}_{(\mathbf{x},\mathbf{y}) \sim p_{X}(\mathbf{x})\cdot p_Y(\mathbf{y})}\biggl[-D(\mathbf{x},\mathbf{y})\biggr].
\label{eq:value_function_p}
\end{align}
Given a fixed generator $G$, Lemma \ref{lemma:Lemma2} asserts that $\mathcal{J}_{\alpha}(G,D)$ is maximized for 
\begin{equation}
D(\mathbf{x},\mathbf{y})=D^*(\mathbf{x},\mathbf{y})=\alpha \cdot \frac{p_{XY}(\mathbf{x},\mathbf{y})}{p_{X}(\mathbf{x})\cdot p_Y(\mathbf{y})}.
\end{equation}
Substituting $D^*(\mathbf{x},\mathbf{y})$ in \eqref{eq:value_function_p} gives
\begin{align}
\mathcal{J}_{\alpha}(G,D^*) =  \; \alpha & \cdot \mathbb{E}_{(\mathbf{x},\mathbf{y}) \sim p_{XY}(\mathbf{x},\mathbf{y})}\biggl[\log \biggl(\alpha \cdot \frac{p_{XY}(\mathbf{x},\mathbf{y})}{p_{X}(\mathbf{x})\cdot p_Y(\mathbf{y})} \biggr)\biggr] \nonumber \\
& +\mathbb{E}_{(\mathbf{x},\mathbf{y}) \sim p_{X}(\mathbf{x})\cdot p_Y(\mathbf{y})}\biggl[-\alpha \cdot \frac{p_{XY}(\mathbf{x},\mathbf{y})}{p_{X}(\mathbf{x})\cdot p_Y(\mathbf{y})}\biggr],
\end{align}
and using the Lebesgue integral to compute the expectation
\begin{align}
\mathcal{J}_{\alpha}(G,D^*) = \; & \alpha \int_{\mathbf{y}} \int_{\mathbf{x}}{p_{XY}(\mathbf{x},\mathbf{y}) \log \biggl(\alpha  \frac{p_{XY}(\mathbf{x},\mathbf{y})}{p_{X}(\mathbf{x})\cdot p_Y(\mathbf{y})} \biggr) \diff \mathbf{x} \diff \mathbf{y}} \nonumber \\
& + \int_{\mathbf{y}} \int_{\mathbf{x}}{p_{X}(\mathbf{x}) p_Y(\mathbf{y}) \biggl(-\alpha \frac{p_{XY}(\mathbf{x},\mathbf{y})}{p_{X}(\mathbf{x})\cdot p_Y(\mathbf{y})}\biggr) \diff \mathbf{x} \diff \mathbf{y}},
\end{align}
where it is easy to recognize that the second term of the right hand side is equal to $-\alpha$, resulting in
\begin{align}
\mathcal{J}_{\alpha}(G,D^*) =  \; & \alpha \log(\alpha) - \alpha + \nonumber \\
& +\alpha \cdot \mathbb{E}_{(\mathbf{x},\mathbf{y}) \sim p_{XY}(\mathbf{x},\mathbf{y})}\biggl[\log \biggl(\frac{p_{XY}(\mathbf{x},\mathbf{y})}{p_{X}(\mathbf{x})\cdot p_Y(\mathbf{y})} \biggr)\biggr].
\end{align}
Finally, by definition of mutual information ${I}(X;Y)$, a maximization over the generator $G$ results in a mutual information maximization
\begin{equation}
\max_{G} \mathcal{J}_{\alpha}(G,D^*) = \max_{G} \bigl(\alpha \log(\alpha) - \alpha + \alpha \cdot {I}(X;Y)\bigr),
\end{equation}
hence, since $G$ models the distribution $p_X(\mathbf{x})$ and $\alpha$ is a constant,
\begin{equation}
\max_{G} \mathcal{J}_{\alpha}(G,D^*)  + \alpha - \alpha \log(\alpha) = \alpha \cdot \max_{p_X(\mathbf{x})} {I}(X;Y).
\end{equation}
\qedhere  
\end{proof}

\begin{lemma}
\label{lemma:Lemma5}
Let $f(u)=-\alpha \cdot \log(u)$, $\alpha>0$, be a convex, lower-semicontinuous function. Then, $f$ admits a convex conjugate function $f^*$, defined as
\begin{equation}
f^*(t) = \sup_{u\in \mathbb{R_+}} \{ut - f(u)\}
\end{equation}
with $t<0$ and expression
\begin{equation}
f^*(t) = -\alpha - \alpha \cdot \log \biggl(-\frac{t}{\alpha}\biggr).
\label{eq:fenchel_conj}
\end{equation}
\end{lemma}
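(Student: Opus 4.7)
The plan is to compute the supremum $f^*(t) = \sup_{u>0}\{ut + \alpha\log(u)\}$ directly by elementary calculus. Define $g(u) = ut + \alpha\log(u)$ on $u \in \mathbb{R}_+$. I would first dispose of the degenerate sign cases by inspecting boundary behavior: if $t \geq 0$, then $g(u) \to +\infty$ as $u \to +\infty$ (since both $ut$ and $\alpha\log(u)$ are non-decreasing and at least one diverges), so $f^*(t) = +\infty$ and the supremum is not attained. This confirms that the finite regime is exactly $t < 0$, so I restrict attention to negative $t$.

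Next, for $t < 0$, I would differentiate: $g'(u) = t + \alpha/u$, which vanishes at the unique stationary point $u^* = -\alpha/t > 0$. To confirm that this is a global maximum rather than a minimum or saddle, I would use two facts. First, the second derivative $g''(u) = -\alpha/u^2 < 0$ shows $g$ is strictly concave. Second, I would check the boundary limits: as $u \to 0^+$, $\alpha\log(u) \to -\infty$ while $ut \to 0$, so $g(u) \to -\infty$; as $u \to +\infty$, the linear term $ut$ (with $t<0$) dominates the logarithm and again $g(u) \to -\infty$. Strict concavity plus diverging boundary behavior forces $u^*$ to be the unique global maximizer.

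Finally, I would substitute $u^* = -\alpha/t$ into $g$ and simplify:
\begin{equation}
f^*(t) = u^* t + \alpha\log(u^*) = -\alpha + \alpha\log\!\left(-\frac{\alpha}{t}\right) = -\alpha - \alpha\log\!\left(-\frac{t}{\alpha}\right),
\end{equation}
which matches \eqref{eq:fenchel_conj}. There is no real obstacle here; the only subtlety worth flagging is the sign bookkeeping that determines the effective domain of $f^*$, namely that the convex conjugate of $-\alpha\log$ is finite only on $(-\infty, 0)$. Convexity and lower-semicontinuity of $f$ on $\mathbb{R}_+$ are immediate from $f'' = \alpha/u^2 > 0$ and continuity of $\log$, so the hypotheses ensuring $f^{**} = f$ are automatic and no additional work is required on that front.
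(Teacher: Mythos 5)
Your proposal is correct and follows essentially the same route as the paper's own proof: both locate the stationary point $u^*=-\alpha/t$ for $t<0$, confirm it is a global maximum via the boundary limits $g(u)\to-\infty$ as $u\to 0^+$ and $u\to+\infty$, and substitute to obtain \eqref{eq:fenchel_conj}, with $f^*(t)=+\infty$ for $t\geq 0$. Your additional second-derivative (concavity) check is a harmless strengthening but does not change the argument.
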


\begin{proof}
From the Fenchel conjugate definition,
\begin{equation}
f^*(t) = \sup_{u\in \mathbb{R_+}} \{ut + \alpha \log(u)\},
\end{equation}
partial differentiation w.r.t. $u$ gives one stationary point in $u^*=-\alpha/t$ for negatives $t$. In particular, $u^*$ is a global maximum since 
\begin{equation}
\lim_{u\to 0^+} ut + \alpha \log(u)\ = \lim_{u\to +\infty} ut + \alpha \log(u)\ = -\infty.
\end{equation}
Hence, for $t\in \mathbb{R_{-}}$, 
\begin{equation}
f^*(t) = -\alpha +\alpha \cdot \log  \biggl(-\frac{\alpha}{t}\biggr).
\end{equation}
For $t\geq 0$, it is straightforward to verify that $f^*(t) = +\infty$.
\qedhere  
\end{proof}

\begin{theorem}
For any class of functions $\mathcal{D}$ mapping from the sample domain $\mathcal{X}$ to $\mathbb{R_+}$, the following lower bound on the mutual information holds with $\alpha>0$
\begin{align}
I(X;Y) \geq & \sup_{D\in \mathcal{D_+}} \biggl\{\mathbb{E}_{(\mathbf{x},\mathbf{y}) \sim p_{XY}(\mathbf{x},\mathbf{y})}\bigl[\log (D(\mathbf{x},\mathbf{y}))\bigr] \nonumber \\
& - \frac{1}{\alpha}\cdot \mathbb{E}_{(\mathbf{x},\mathbf{y}) \sim p_{X}(\mathbf{x})\cdot p_{Y}(\mathbf{y})}\bigl[D(\mathbf{x},\mathbf{y})\bigr] +1 -\log(\alpha)\biggr\}
\end{align}
\end{theorem}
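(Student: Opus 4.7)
The plan is to derive the stated lower bound via the Fenchel--Moreau (convex-conjugate) representation of $f(u)=-\alpha\log u$, whose conjugate $f^{*}$ has already been identified in Lemma~\ref{lemma:Lemma5}. The starting point is the identity $I(X;Y)=D_{\text{KL}}(p_{XY}\,\|\,p_{X}p_{Y})$, and the resulting inequality will follow essentially the template of Nguyen--Wainwright--Jordan, specialized to the particular convex function $-\alpha\log(\cdot)$.

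First I would rewrite the KL divergence so that $f$ appears explicitly. Pulling out a factor $1/\alpha$ and using $\log(p_{XY}/(p_{X}p_{Y}))=-\log(p_{X}p_{Y}/p_{XY})$ gives
\[
I(X;Y) \;=\; \int \frac{p_{XY}(\mathbf{x},\mathbf{y})}{\alpha}\, f\!\left(\frac{p_{X}(\mathbf{x})p_{Y}(\mathbf{y})}{p_{XY}(\mathbf{x},\mathbf{y})}\right) d\mathbf{x}\,d\mathbf{y},
\]
with $f(u)=-\alpha\log u$. Next, apply the biconjugacy $f(u)=\sup_{t<0}\{ut-f^{*}(t)\}$ pointwise under the integral and swap supremum with integration. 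Restricting the sup to a family of measurable functions $T:\mathcal{X}\times\mathcal{Y}\to\mathbb{R}_{-}$ only weakens the equality into a lower bound, yielding
\[
I(X;Y) \;\geq\; \sup_{T\in\mathcal{D}_{-}}\Big\{\tfrac{1}{\alpha}\mathbb{E}_{p_{X}p_{Y}}[T] - \tfrac{1}{\alpha}\mathbb{E}_{p_{XY}}[f^{*}(T)]\Big\}.
\]

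Finally, I would perform the change of variable $D(\mathbf{x},\mathbf{y})=-T(\mathbf{x},\mathbf{y})$ to re-index the supremum over positive discriminators $D\in\mathcal{D}_{+}$, and substitute $f^{*}(-D)=-\alpha-\alpha\log(D/\alpha)$ from Lemma~\ref{lemma:Lemma5}. The constant $-\alpha$ inside $f^{*}$ contributes the additive $+1$ after being multiplied by $-1/\alpha$, while the logarithmic piece splits as $\log D-\log\alpha$; collecting terms produces exactly the expression in the thesis.

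The main obstacle is the sign and domain bookkeeping: $f^{*}$ is finite only for $t<0$, so one must justify the restriction of the outer supremum to $T\in\mathcal{D}_{-}$ (equivalently $D\in\mathcal{D}_{+}$) and then faithfully track signs through $D=-T$ without accidentally flipping the inequality. A consistency check is that the supremum is attained at $D^{*}=\alpha\cdot p_{XY}/(p_{X}p_{Y})$, which is precisely the Fenchel stationary point $u^{*}=-\alpha/t$ re-expressed through $D=-T$ and matches the optimum discriminator obtained in Lemma~\ref{lemma:Lemma3}; this both verifies tightness of the bound in the non-parametric limit and confirms the residual constant $+1-\log\alpha$.
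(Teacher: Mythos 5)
Your proposal is correct and follows essentially the same route as the paper's own proof: the Fenchel biconjugate of $f(u)=-\alpha\log u$ from Lemma~\ref{lemma:Lemma5}, the Nguyen--Wainwright--Jordan interchange of supremum and integration (which is where the inequality enters), the sign change $D=-T$ to positive discriminators, and the final substitution yielding the residual constant $1-\log\alpha$. The only cosmetic difference is that the paper first derives the bound for a generic $D_{\text{KL}}(p\|q)$ and then specializes to $p_{XY}$ versus $p_X p_Y$, whereas you work with the mutual information directly throughout.
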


\begin{proof}
In the KL divergence expression, dividing and multiplying by $\alpha$ yields
\begin{align}
D_{KL}(p||q) & = \int_{\mathbf{x}} {p_{X}(\mathbf{x}) \log \biggl(\frac{p_{X}(\mathbf{x})}{q_{X}(\mathbf{x})} \biggr) \diff \mathbf{x}} \nonumber \\
& = \int_{\mathbf{x}} {\frac{p_{X}(\mathbf{x})}{\alpha} \biggl(-\alpha \log \biggl(\frac{q_{X}(\mathbf{x})}{p_{X}(\mathbf{x})} \biggr)\biggr) \diff \mathbf{x}}
\end{align}
so that it is possible t rewrite the logarithmic term of the KL divergence in terms of its conjugate, in particular 
\begin{align}
& D_{KL}(p||q) = \int_{\mathbf{x}} {\frac{p_{X}(\mathbf{x})}{\alpha} \sup_{\mathbf{y}} \biggl\{ \mathbf{y} \cdot \frac{q_{X}(\mathbf{x})}{p_{X}(\mathbf{x})} - f^*(\mathbf{y}) \biggr\} \diff \mathbf{x}} \nonumber \\
& = \sup_{\mathbf{y}} \biggl\{ \int_{\mathbf{x}} {\frac{q_{X}(\mathbf{x})}{\alpha}\mathbf{y}\diff \mathbf{x}} - \int_{\mathbf{x}}{\frac{p_{X}(\mathbf{x})}{\alpha} f^*(\mathbf{y}) \diff \mathbf{x}} \biggr\} \nonumber \\
& \geq  \sup_{T\in \mathcal{D_{-}}} \biggl\{ \frac{1}{\alpha} \cdot \mathbb{E}_{\mathbf{x} \sim q_{X}(\mathbf{x})}\bigl[T(\mathbf{x})\bigr]-\frac{1}{\alpha}\cdot  \mathbb{E}_{\mathbf{x} \sim p_{X}(\mathbf{x})}\bigl[f^*\bigl(T(\mathbf{x})\bigr)\bigr]\biggr\} \nonumber \\
& = \sup_{D\in \mathcal{D_{+}}} \biggl\{-\frac{1}{\alpha}\cdot  \mathbb{E}_{\mathbf{x} \sim p_{X}(\mathbf{x})}\bigl[f^*\bigl(-D(\mathbf{x})\bigr)\bigr] - \frac{1}{\alpha} \cdot \mathbb{E}_{\mathbf{x} \sim q_{X}(\mathbf{x})}\bigl[D(\mathbf{x})\bigr] \biggr\},
\end{align}
where the inequality is introduced to take into account the fact that $\mathcal{D_+}$ is a subset of all possible functions, and $D = -T$. From Lemma \ref{lemma:Lemma5}, the Fenchel conjugate of $f(u)=-\alpha \log(u)$, $\alpha>0$, takes the form as in \eqref{eq:fenchel_conj}. Substituting it above yields
\begin{align}
D_{KL}(p||q) \geq & \sup_{D\in \mathcal{D_{+}}} \biggl\{-\frac{1}{\alpha}\cdot  \mathbb{E}_{\mathbf{x} \sim p_{X}(\mathbf{x})}\biggl[-\alpha - \alpha \cdot \log\biggl(\frac{D(\mathbf{x})}{\alpha}\biggr)\biggr] \nonumber \\
&  - \frac{1}{\alpha} \cdot \mathbb{E}_{\mathbf{x} \sim q_{X}(\mathbf{x})}\bigl[D(\mathbf{x})\bigr] \biggr\} \nonumber \\
= & \sup_{D\in \mathcal{D_+}} \biggl\{ 1 -\log(\alpha)+ \mathbb{E}_{\mathbf{x} \sim p_{X}(\mathbf{x})}\bigl[\log (D(\mathbf{x}))\bigr] \nonumber \\
&  - \frac{1}{\alpha}\cdot \mathbb{E}_{\mathbf{x} \sim q_{X}(\mathbf{x})}\bigl[D(\mathbf{x})\bigr] \biggr\}.
\end{align}

The alpha variational representation of the KL divergence can be extended to lower bound the mutual information. Indeed, given that $I(X;Y)=D_{KL}(p_{XY}(\mathbf{x},\mathbf{y})||p_{X}(\mathbf{x})\cdot p_Y(\mathbf{y}))$, the thesis follows

\begin{align}
I(X;Y) \geq & \sup_{D\in \mathcal{D_+}} \biggl\{1 -\log(\alpha) + \mathbb{E}_{(\mathbf{x},\mathbf{y}) \sim p_{XY}(\mathbf{x},\mathbf{y})}\bigl[\log (D(\mathbf{x},\mathbf{y}))\bigr] \nonumber \\ 
& - \frac{1}{\alpha}\cdot \mathbb{E}_{(\mathbf{x},\mathbf{y}) \sim p_{X}(\mathbf{x})\cdot p_{Y}(\mathbf{y})}\bigl[D(\mathbf{x},\mathbf{y})\bigr] \biggr\}.
\end{align}

\qedhere  
\end{proof}

\subsection{Experimental Details}
\label{sec:experiment_details}

In this section, we describe the implementation details of DIME and CORTICAL.
\subsubsection{DIME implementation details}
For all the mutual information estimators proposed, we consider a joint architecture that concatenates the channel input-output, $\mathbf{x},\mathbf{y}$. The permutation function $\pi(\cdot)$ is implemented as follows: given one realization sample $(\mathbf{x},\mathbf{y})$ of the joint distribution $p_{XY}(\mathbf{x},\mathbf{y})$, we randomly permute the $d$ elements of $\mathbf{y}$ such that the element-wise pairs are $(x_i,y_j)$, with $i\neq j$. In this way, $(\mathbf{x},\pi(\mathbf{y}))$ represents a realization of the product of the marginal distributions $p_{X}(\mathbf{x})\cdot p_{Y}(\mathbf{y})$.

The discriminative estimators possess the same architecture, a two layer multilayer perceptron (MLP) neural network with $100$ neurons in each layer with the ReLU activation function. The only difference resides in the final layer where we use a linear dense layer for $I_{MINE}$, while a sigmoid layer for $I_{iDIME}$ and a softplus layer for $I_{dDIME}$, defined as
\begin{equation*}
 D^*(\mathbf{x},\mathbf{y}; \theta) = \log(1+\exp(\mathbf{W}^*_L \cdot \mathbf{z}_{L-1}+b^*_L)).
\end{equation*}
Each discriminator used in the mutual information estimator has been trained for $n = 5$k iterations with a batch size of $512$, for every SNR value. We use Adam as the optimizer with a learning rate of $0.002$.  To avoid training bias, we repeat each training $10$ times.
Tab.\ref{tab:dime_architecture} provides all the architecture and parameters details.

In the testing phase, we compute a total of $10000$ instantaneous estimations of the mutual information (single batch as input) for all the estimators and display by means of the shadow regions of Fig.\ref{fig:i_estimator} and \ref{fig:j_estimator} the standard deviation. This shows the variability of the instantaneous estimates which are then averaged to provide the final mutual information estimate.

\begin{table}[h]
	\centering
	\caption{Discriminative networks architecture and training parameters.}
	\begin{tabular}{ p{4cm}|p{2cm}|p{2cm}} 
		\toprule
		\textbf{Layer} & \textbf{Output dimension } 		& \textbf{Activation function} \\
		\midrule
		\textbf{i-DIME} & &\\
		Input $[\mathbf{x},\mathbf{y}]$ & $2\cdot d$ & \\ 
		Fully connected & 100 & ReLU \\ 
		Dropout &0.3&  \\ 
		Fully connected &100& ReLU  \\ 
		Fully connected & 1 & Sigmoid  \\  		\midrule

		\textbf{d-DIME} & &\\
		Input $[\mathbf{x},\mathbf{y}]$ & $2\cdot d$ & \\ 
		Fully connected & 100 & ReLU \\ 
		Dropout &0.3&  \\ 
		Fully connected &100& ReLU  \\ 
		Fully connected & 1 & Softplus  \\  		\midrule
		\textbf{MINE} & &\\
		Input $[\mathbf{x},\mathbf{y}]$ & $2\cdot d$ & \\ 
		Fully connected & 100 & ReLU \\ 
		Dropout &0.3&  \\ 
		Fully connected &100& ReLU  \\ 
		Fully connected & 1 &   \\  		\midrule
		Batch size &  \multicolumn{2}{c}{512}  \\ 
		Number of training iterations &  \multicolumn{2}{c}{5000}  \\ 
		Learning rate &  \multicolumn{2}{c}{0.002}   \\ 
		Optimizer &  \multicolumn{2}{c}{Adam ($\beta_1$ = 0.5, $\beta_2$ = 0.999)}  \\ 		\midrule

	\end{tabular}
	
	\label{tab:dime_architecture}
\end{table}

\subsubsection{CORTICAL implementation details}

In the CORTICAL framework, we exploit the architectures discussed in Tab.\ref{tab:dime_architecture} for the discriminator implementation. The encoder/generator, instead, consists of a three layer MLP with $100$ neurons in each with ReLU activation. The final layer is a linear dense layer to allow synthesizing any type of distribution followed by a batch normalization layer to account for the power constraint. We alternate a generator training iteration every $10$ training iterations of the discriminator, for a total number of $500$ generator training iterations. In the continuous coding case, the input vector $\mathbf{z}$ is sampled from a $30$-dimensional Gaussian distribution. For the discrete case, the input vector $\mathbf{z}$ is sampled from a multivariate Bernoulli $3$-dimensional distribution ($M=8$) with probability $p=0.5$. In other words, the input vector $\mathbf{z}$ is a sequence of bits that corresponds to the representation in base $2$ of $s$, where $s \in \mathbb{N}$ and $s = \{0,1,\dots,M-1\}$. It is interesting to notice that the dimension of $\mathbf{z}$ is equal to $\log_2(M)$ and not to $M$, which is the common case when dealing with classification problems solved via the cross-entropy cost function and one-hot coding for the softmax last layer (as it is done in the autoencoder for communications scenario).

Tab.\ref{tab:cortical_architecture} provides all the architecture and parameters details of the used generator in the CORTICAL framework.

\begin{table}
	\centering
	\caption{CORTICAL generator architecture and training parameters.}
	\begin{tabular}{ p{4cm}|p{2cm}|p{2cm}} 
		\toprule
		\textbf{Layer} & \textbf{Output dimension } 		& \textbf{Activation function} \\
		\midrule
		\textbf{CORTICAL generator} & &\\
		Input $\mathbf{z}$ & 30 (continuous) / 3 (discrete) & \\ 
		Fully connected & 100 & ReLU \\ 
		Fully connected &100& ReLU  \\ 
		Fully connected &100& ReLU  \\ 
		Fully connected & $d$ &   \\  		\midrule

		Batch size &  \multicolumn{2}{c}{512}  \\ 
		Number of training iterations &  \multicolumn{2}{c}{500}  \\ 
		Learning rate &  \multicolumn{2}{c}{0.0002}   \\ 
		Optimizer &  \multicolumn{2}{c}{Adam ($\beta_1$ = 0.5, $\beta_2$ = 0.999)}  \\ 		\midrule

	\end{tabular}
	
	\label{tab:cortical_architecture}
\end{table}

\end{document}